\newcommand{\xlabel}[1]{\psfrag{xlabel}[Bc][Bc]{#1}} 
\newcommand{\ylabel}[1]{\psfrag{ylabel}[Bc][Bc]{#1}} 
\newcommand{\legend}[2]{\psfrag{#1}[Bl][Bl]{\raisebox{0.5pt}[0cm][0cm]{#2}}} 
\newtheorem{theorem}{Theorem}
\newtheorem{lemma}[theorem]{Lemma}
\newtheorem{definition}{Definition}
\newtheorem{assumption}{Assumption}
\newtheorem{remarkenv}{Remark}
\newenvironment{remark}{\begin{remarkenv}}{\hfill\raisebox{0.5mm}[0cm][0cm]{$\lhd$}\end{remarkenv}}
\newcommand{\sortbib}[1]{}
\newcommand{\R}{\ensuremath{\mathds{R}}} 
\newcommand{\ones}{\ensuremath{\bm{1}}} 
\newcommand{\defl}{\ensuremath{\mathrel{\mathop:}=}}              
\newcommand{\kron}{\ensuremath{\raisebox{0.7pt}{$\:\otimes\:$}}}  
\newcommand{\T}{\ensuremath{\mathrm{T}}}                          
\DeclareMathOperator{\diag}{diag}       
\DeclareMathOperator{\rank}{rank}        
\newcommand{\Ecal}{\ensuremath{\mathcal{E}}}
\newcommand{\Gcal}{\ensuremath{\mathcal{G}}}
\newcommand{\Tcal}{\ensuremath{\mathcal{T}}}
\newcommand{\Vcal}{\ensuremath{\mathcal{V}}}
\newcommand{\sys}{\ensuremath{\bm{\Sigma}}}
\newcommand{\sysred}{\ensuremath{\bm{\hat{\Sigma}}}}
\renewcommand{\sc}{\ensuremath{\text{\rm c}}}
\newcommand{\se}{\ensuremath{\text{\rm e}}}
\renewcommand{\sf}{\ensuremath{\text{\rm f}}}
\newcommand{\so}{\ensuremath{\text{\rm o}}}
\newcommand{\su}{\ensuremath{\text{\rm u}}}
\title{\LARGE\bf
Model reduction of networked passive systems through clustering
}
\author{Bart Besselink, Henrik Sandberg, Karl Henrik Johansson%
\thanks{The authors are with the ACCESS Linaeus Centre and Department of Automatic Control, School of Electrical Engineering, KTH Royal Institute of Technology, Stockholm, Sweden, {\tt\small bart.besselink@ee.kth.se, hsan@kth.se, kallej@kth.se}}%
}
\begin{document}

\maketitle
\thispagestyle{empty}
\pagestyle{empty}

\begin{abstract}
In this paper, a model reduction procedure for a network of interconnected identical passive subsystems is presented. Here, rather than performing model reduction on the subsystems, adjacent subsystems are clustered, leading to a reduced-order networked system that allows for a convenient physical interpretation. The identification of the subsystems to be clustered is performed through controllability and observability analysis of an associated edge system and it is shown that the property of synchronization (i.e., the convergence of trajectories of the subsystems to each other) is preserved during reduction. The results are illustrated by means of an example.
\end{abstract}

\section{Introduction}
Electrical power grids, social networks and the internet and biological or chemical networks are examples of large-scale networks of interconnected dynamical (sub)systems, see, e.g., \cite{strogatz_2001}. Their large scale and complexity complicates the analysis or control of such networked systems, motivating the need for tools to obtain \emph{approximate} networked systems with lower complexity.

Model reduction techniques such as balanced truncation \cite{moore_1981} or optimal Hankel norm approximation \cite{glover_1984} provide methods for obtaining reduced-order approximations of large-scale systems \cite{book_antoulas_2005,besselink_2013b}, but are not directly suited for application to networked systems. Namely, the application of such methods typically does not preserve the interconnection structure, making the reduced-order models hard to interpret and potentially irrelevant for the design of distributed controllers. This paper therefore deals with the development of a dedicated reduction procedure for networked systems, based on the clustering of subsystems.

Despite the large interest in networked systems, the reduction of such systems has not received much attention in the literature. An exception is given by the work in~\cite{ishizaki_2014}, where a method for the clustering of subsystems is developed, considering subsystems that have scalar first-order dynamics. A different perspective is taken in \cite{monshizadeh_2013}, where networks of identical linear (higher-order) subsystems are considered. In~\cite{monshizadeh_2013}, reduction is performed on the basis of these subsystems only, thus leaving the interconnection structure untouched. It is noted that such reduction techniques for networked systems can be considered as a structure-preserving model reduction technique \cite{sandberg_2009}.

In the current paper, networks of identical linear subsystems are considered. However, rather than performing reduction of the individual subsystems, reduction is achieved by clustering neighbouring subsystems. This thus essentially represents a reduction of the interconnection topology, leading to a reduced-order interconnection structure that allows for a convenient physical interpretation. In particular, the subsystems are assumed to be passive and the interconnection topology is assumed to have a tree structure. For such systems, the importance of each edge in the interconnection structure (representing a coupling between subsystems) is studied through an analysis of its controllability and observability properties, hereby identifying pairs of adjacent vertices (subsystems) that are hard to steer apart or difficult to distinguish. Motivated by the method of balanced truncation and its extensions, these pairs of adjacent vertices will be clustered to obtain a reduced-order interconnection topology.

This analysis relies on two crucial aspects. First, the passivity property of the subsystems ensures that controllability and observability properties of the entire networked system can be decomposed into parts associated to the interconnection topology and the subsystem dynamics, respectively. Here, the former is used to identify important edges. Second, a novel factorization of the graph Laplacian describing the interconnection topology is exploited, which allows for the definition of an \emph{edge Laplacian} for weighted and directed graphs, hereby extending a result from \cite{zelazo_2011}. For tree structures, this factorization is shown to have desirable properties in the scope of model reduction through clustering.

Finally, it will be shown that the reduced-order networked system obtained by the clustering of subsystems preserves synchronization (i.e., the convergence of trajectories of the subsystems to each other) of the original networked system.

The remainder of this paper is organized as follows. After defining the problem in Section~\ref{sec_problemsetting}, the edge Laplacian and its relation to synchronization is discussed in Section~\ref{sec_edgelaplacian_sync}. The clustering-based model reduction procedure is introduced in Section~\ref{sec_modred} and illustrated by means of an example in Section~\ref{sec_example}. Finally, conclusions are stated in Section~\ref{sec_conclusions}.

\textit{Notation.} The field of real numbers is denoted by $\R$. Given a matrix $X\in\R^{n\times m}$, its entry in row $i$ and column $j$ is denoted as $(X)_{ij}$. The identity matrix of size $n$ is denoted as $I_n$, whereas $\ones_n$ denotes the vector of all ones of length $n$. The subscript $n$ is omitted when no confusion arises. Moreover, $e_i$ denotes the $i$-th column of $I_n$. Finally, $X\kron Y$ denotes the Kronecker product of the matrices $X$ and $Y$, whose definition and properties can be found in, e.g., \cite{brewer_1978}.

\section{Problem setting}\label{sec_problemsetting}
A network of identical subsystems $\sys_i$ is considered, of which a minimal realization can be written in the form
\begin{align}
\sys_i:\left\{\begin{array}{rclcl}
\dot{x}_i &=& Ax_i + Bv_i &=& (J-R)Qx_i + Bv_i, \\
z_i &=& Cx_i &=& B^{\T}Qx_i,
\end{array}\right.\label{eqn_sysi}
\end{align}
with $x_i\in\R^n$, $v_i,z_i\in\R^m$ and $i\in\{1,2,\ldots,\bar{n}\}$. The rightmost representation in (\ref{eqn_sysi}) is a so-called port-Hamiltonian form \cite{willems_1972b,book_vanderschaft_2000}, in which $Q = Q^{\T}\succ0$ characterizes the energy stored in $\sys_i$ as $V(x_i) = \tfrac{1}{2}x_i^{\T}Qx_i$. Next, $J = -J^{\T}$ is a skew-symmetric matrix and $R = R^{\T}\succcurlyeq0$ represents any internal dissipation. It is well-known that such a system is passive (see, e.g., \cite{willems_1972b} for a definition of passivity).

The subsystems $\sys_i$ as in (\ref{eqn_sysi}) are interconnected as
\begin{align}\textstyle
v_i = \sum_{j=1,j\neq i}^{\bar{n}} w_{ij}(z_j - z_i) + \sum_{j=1}^{\bar{m}}g_{ij}u_j,
\label{eqn_couplingi}
\end{align}
where $u_j\in\R^m$, $j\in\{1,2,\ldots,\bar{m}\}$ are the external inputs to the networked system. In (\ref{eqn_couplingi}), the weights $w_{ij}\in\R$ satisfying $w_{ij}\geq0$ represent the strength of the diffusive coupling between the subsystems, whereas $g_{ij}\in\R$ describe the distribution and strength of the external inputs amongst the subsystems. Similarly, external outputs are given by
\begin{align}\textstyle
y_i = \sum_{j=1}^{\bar{n}}h_{ij}z_j
\label{eqn_outputi}
\end{align}
with $y_i\in\R^m$, $i\in\{1,2,\dots,\bar{p}\}$. After defining $L$ as
\begin{align}
(L)_{ij} = \left\{\begin{array}{ll}
-w_{ij} &,\, i \neq j, \\
\sum_{j=1,j\neq i}^{\bar{n}} w_{ij}\! &,\, i = j,
\end{array}\right.\label{eqn_Ldef}
\end{align}
and collecting the parameters $g_{ij}$ and $h_{ij}$ as $G = \{g_{ij}\}$ and $H = \{h_{ij}\}$, respectively, the networked system given by (\ref{eqn_sysi}), (\ref{eqn_couplingi}) and (\ref{eqn_outputi}) can be written as
\begin{align}
\sys:\left\{\begin{array}{rcl}
\dot{x} &=& (I\kron A - L\kron BC)x + (G\kron B)u, \\
y &=& (H\kron C)x
\end{array}\right.\label{eqn_sys}
\end{align}
where $x^{\T} = [\begin{array}{cccc} x_1^{\T} & x_2^{\T} & \ldots & x_{\bar{n}}^{\T} \end{array}]$, $u^{\T} = [\begin{array}{cccc} u_1^{\T} & u_2^{\T} & \ldots & u_{\bar{m}}^{\T} \end{array}]$ and $y^{\T} = [\begin{array}{cccc} y_1^{\T} & y_2^{\T} & \ldots & y_{\bar{p}}^{\T} \end{array}]$.

The objective of this paper is to obtain a reduced-order version of the networked system (\ref{eqn_sys}) through the clustering of neighbouring subsystems, essentially creating a new interconnection structure of the form (\ref{eqn_couplingi}). A cluster is represented by a single subsystem, approximating the dynamics of a group of neighbouring subsystems in the original networked system. Consequently, the resulting reduced-order networked system is easy to interpret. Furthermore, this reduced-order networked system should preserve synchronization properties (i.e., the convergence of subsystem trajectories to each other) of the original system. Moreover, the input-output behavior of the reduced-order system should provide a good approximation of that of the original networked system.

\section{Edge Laplacian and synchronization}\label{sec_edgelaplacian_sync}
The interconnection (\ref{eqn_couplingi}) of the subsystems as characterized by $L$ as in (\ref{eqn_Ldef}) can be associated to a directed graph $\Gcal = (\Vcal,\Ecal)$ (see, e.g, \cite{book_godsil_2001,book_mesbahi_2010} for details on graph theory). Here, $\Vcal = \{1,2,\ldots,\bar{n}\}$ represents the set of vertices characterizing the subsystems and $\Ecal\subseteq\Vcal\times\Vcal$ gives the set of directed edges (or arcs) satisfying $(i,j)\in\Ecal$ if and only if $w_{ji}>0$.

Besides this directed graph $\Gcal$, an undirected version of the same graph is introduced as follows.
\begin{definition}
Let $\Gcal$ be a directed graph with vertex set $\Vcal$ and (directed) edge set $\Ecal$. Then, the undirected graph $\Gcal_{\su} = (\Vcal,\Ecal_{\su})$ with $(i,j)\in\Ecal_{\su}$ if and only if $w_{ij} + w_{ji} > 0$ is said to be the underlying undirected graph.
\end{definition}
The underlying undirected graph $\Gcal_{\su}$ thus has an edge between vertices $i$ and $j$ if at least one of the weights $w_{ij}$ and $w_{ji}$ is strictly positive, i.e., if there exists at least one directed edge between $i$ and $j$. Then, by exploiting the incidence matrix $E\in\R^{\bar{n}\times\bar{n}_{\se}}$ (with elements in $\{0,\pm1\}$ and where $\bar{n}_{\se}$ is the number of edges in $\Gcal_{\su}$) of an arbitrary orientation of $\Gcal_{\su}$, the matrix $L$ as in (\ref{eqn_Ldef}) can be factorized as follows.
\begin{lemma}\label{lem_Lfactor}
Consider the matrix $L$ as in (\ref{eqn_Ldef}) and let $E$ be an oriented incidence matrix of the underlying undirected graph $\Gcal_{\su}$. Then, $L$ can be factored as
\begin{align}
L = FE^{\T},\label{eqn_Lfactor}
\end{align}
where $F$ has the same structure as $E$. In particular, let the $l$-th column of $E$ be given as $e_i-e_j$, i.e., characterizing the edge connecting vertices $i$ and $j$. Then, the $l$-th column of $F$ is given as $w_{ij}e_i - w_{ji}e_j$, with $w_{ij}$ the weights as in (\ref{eqn_couplingi}).
\end{lemma}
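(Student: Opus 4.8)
The plan is to verify the identity $L=FE^{\T}$ by a direct entrywise computation based on the rank-one expansion $FE^{\T}=\sum_{l=1}^{\bar{n}_{\se}}f_l\epsilon_l^{\T}$, where $f_l$ and $\epsilon_l$ denote the $l$-th columns of $F$ and $E$ and the sum runs over the edges of $\Gcal_{\su}$. A key preliminary observation is that the definition of the underlying undirected graph $\Gcal_{\su}$ guarantees that $\{i,j\}\notin\Ecal_{\su}$ implies $w_{ij}=w_{ji}=0$; equivalently, every nonzero weight $w_{ij}$ corresponds to an edge of $\Gcal_{\su}$, which is what makes the entries of $L$ indexed by non-adjacent vertex pairs consistent with the rank-one sum.

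Next I would note that the claimed column pairing is independent of the chosen orientation of $\Gcal_{\su}$: reversing the orientation of the edge between $i$ and $j$ replaces $\epsilon_l=e_i-e_j$ by $-\epsilon_l$ and, by the prescription in the statement, replaces $f_l=w_{ij}e_i-w_{ji}e_j$ by $-f_l$, so $f_l\epsilon_l^{\T}$ is unchanged; hence it suffices to fix one orientation. For an edge $l$ joining $i$ and $j$ with $\epsilon_l=e_i-e_j$ and $f_l=w_{ij}e_i-w_{ji}e_j$, I would expand $f_l\epsilon_l^{\T}=w_{ij}e_ie_i^{\T}-w_{ij}e_ie_j^{\T}-w_{ji}e_je_i^{\T}+w_{ji}e_je_j^{\T}$ and sum over $l$. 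For $i\neq j$, only the edge $\{i,j\}$ (if present) contributes to entry $(i,j)$, yielding $-w_{ij}=(L)_{ij}$, which also holds when the pair is absent since then $w_{ij}=0$. For the diagonal entry $(i,i)$, each edge incident to $i$ contributes the weight $w_{ij}$, so $(FE^{\T})_{ii}=\sum_{j:\{i,j\}\in\Ecal_{\su}}w_{ij}=\sum_{j\neq i}w_{ij}=(L)_{ii}$, again using that absent pairs carry zero weight. Comparison with~(\ref{eqn_Ldef}) then gives $L=FE^{\T}$, and the structural claim on $F$ follows immediately, since the support of each column $f_l$ is contained in that of $\epsilon_l$.

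I do not anticipate a genuine obstacle; the computation is elementary linear algebra. The only points requiring some care are the orientation-independence bookkeeping and the observation that $\Gcal_{\su}$ is precisely the support of the symmetrized weights $w_{ij}+w_{ji}$, which is exactly what reconciles the diagonal and off-diagonal entries of $L$ with the rank-one sum over the edge set $\Ecal_{\su}$.
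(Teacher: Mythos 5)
Your proposal is correct and follows essentially the same route as the paper: the paper also writes $L$ as a sum of rank-one edge contributions $L_{ij}=(w_{ij}e_i-w_{ji}e_j)(e_i-e_j)^{\T}$ and reads off the columns of $F$ and $E$ from there. Your version merely spells out the entrywise verification and the orientation-independence, which the paper leaves implicit.
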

\begin{proof}
It is noted that the matrix $L$ as in (\ref{eqn_Ldef}) can be written as the sum of matrices characterizing each edge in $\Gcal_{\su}$ individually, leading to $L = \sum_{(i,j)\in\Ecal_{\su}} L_{ij}$. Here,
\begin{align}
L_{ij} = (w_{ij}e_i - w_{ji}e_j)(e_i-e_j)^{\T},
\end{align}
such that choosing the columns of $F$ and $E$ as $(w_{ij}e_i-w_{ji}e_j)$ and $e_i-e_j$, respectively, leads to (\ref{eqn_Lfactor}).
\end{proof}
The eigenvalues of $L$ can be related to graph-theoretical properties by exploiting the notion of a directed rooted spanning tree, which is defined as follows (see \cite{book_godsil_2001,ren_2005b}).
\begin{definition}\label{def_dirrootedspanningtree}
A graph $\Tcal$ is said to be a directed rooted spanning tree if it is a directed tree connecting all vertices of the graph, where every vertex, except the single root vertex, has exactly one incoming directed edge.
\end{definition}
The following result can be found in \cite{ren_2005b,book_mesbahi_2010}.
\begin{lemma}\label{lem_Lvsdirrootedspanningtree}
Consider the matrix $L$ as in (\ref{eqn_Ldef}) with $w_{ij}\geq0$. Then, $L$ has at least one zero eigenvalue and all nonzero eigenvalues are in the open right-half plane. Moreover, $L$ has exactly one zero eigenvalue if and only if the associated graph $\Gcal$ contains a directed rooted spanning tree as a subgraph.
\end{lemma}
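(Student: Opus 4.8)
The plan is to recognize that Lemma~\ref{lem_Lvsdirrootedspanningtree} is a classical result about (weighted) graph Laplacians of directed graphs, so I would not re-derive it from scratch but rather outline the two standard arguments: one for the spectral localization (at least one zero eigenvalue, all others in the open right-half plane) and one for the multiplicity characterization.

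First I would establish that $L$ has at least one zero eigenvalue by noting that each row of $L$ sums to zero by construction in~(\ref{eqn_Ldef}): the diagonal entry $(L)_{ii} = \sum_{j\neq i} w_{ij}$ exactly cancels the sum of the off-diagonal entries $-w_{ij}$ in row $i$. Hence $L\ones = 0$, so $0$ is an eigenvalue with eigenvector $\ones$. Next, for the location of the remaining eigenvalues I would invoke the Gershgorin disc theorem: every eigenvalue of $L$ lies in the union of the discs centered at $(L)_{ii} = \sum_{j\neq i} w_{ij}$ with radius $\sum_{j\neq i} |(L)_{ij}| = \sum_{j\neq i} w_{ij}$ (using $w_{ij}\geq 0$). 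Each such disc is contained in the closed right-half plane and touches the imaginary axis only at the origin; therefore every eigenvalue has nonnegative real part, and any eigenvalue on the imaginary axis must equal $0$. This gives the first two claims.

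For the multiplicity statement, the key is to relate $\rank L$ (equivalently, the dimension of $\nll L$, equivalently the algebraic multiplicity of the zero eigenvalue — here one would either cite that for such Laplacians the zero eigenvalue is semisimple, or phrase the claim in terms of the geometric multiplicity / number of zero eigenvalues as in the standard references) to connectivity. The direction I would spell out in most detail: if $\Gcal$ contains a directed rooted spanning tree, then — ordering vertices so the root comes first and each later vertex appears after its in-neighbour on the tree — the principal submatrix of $L$ obtained by deleting the row and column of the root is lower-triangular up to reordering, with strictly positive diagonal entries (each non-root vertex has in-degree at least one from its tree edge), hence nonsingular; this forces $\rank L \geq \bar{n}-1$, so combined with $L\ones=0$ the zero eigenvalue is simple. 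Conversely, if $\Gcal$ has no directed rooted spanning tree, I would argue there exist at least two distinct ``closed'' vertex subsets $S$ (subsets with no incoming edges from their complement, i.e.\ $w_{ij}=0$ for $i\in S$, $j\notin S$), and the indicator vectors of minimal such subsets give at least two linearly independent vectors in $\nll L$; equivalently one partitions the graph into its reach/condensation classes and shows the zero eigenvalue has multiplicity equal to the number of ``root'' strongly connected components, which is $\geq 2$. This case analysis — carefully handling that $\Gcal$ (not $\Gcal_{\su}$) is the directed graph that must contain the rooted spanning tree, and that ``no rooted spanning tree'' means at least two source-components — is the main obstacle, and it is precisely the content of the cited references \cite{ren_2005b,book_mesbahi_2010}, so in the paper I would keep the proof short, give the Gershgorin/row-sum argument explicitly for the first part, sketch the triangularization argument for sufficiency, and defer the converse to the citation.
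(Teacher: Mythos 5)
First, a point of comparison: the paper does not prove this lemma at all --- it is stated as a known result and deferred entirely to \cite{ren_2005b,book_mesbahi_2010} --- so your proposal is being measured against the literature rather than against an in-paper argument. Your first part is correct and complete: the zero row sums of (\ref{eqn_Ldef}) give $L\ones=0$, and the Gershgorin discs are centered at $(L)_{ii}=\sum_{j\neq i}w_{ij}\geq0$ with equal radius, hence lie in the closed right-half plane and touch the imaginary axis only at the origin. This is exactly the standard route.

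Two steps in your multiplicity sketch would, however, fail as written. For sufficiency, the lemma only requires that $\Gcal$ \emph{contain} a directed rooted spanning tree as a subgraph, so non-tree edges are allowed; the grounded Laplacian obtained by deleting the root's row and column is then in general \emph{not} permutation-similar to a lower-triangular matrix (any non-tree edge pointing ``backwards'' relative to your topological order destroys the triangular structure). Your argument covers only the case where $\Gcal$ is itself a tree. The standard repairs are the induction of \cite{ren_2005b}, or the observation that the grounded Laplacian is weakly diagonally dominant with row $i$ strictly dominant exactly when $w_{i,\mathrm{root}}>0$, and is weakly \emph{chained} diagonally dominant (hence nonsingular) precisely because every vertex is reachable from the root. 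For the converse, the indicator vector $\ones_S$ of a closed set $S$ is generally not in $\nll L$: for $i\notin S$ the $i$-th entry of $L\ones_S$ is $-\sum_{j\in S}w_{ij}$, which need not vanish. What is true is that the rows of $L$ indexed by a closed $S$ are supported on the columns of $S$ and form a Laplacian block $L_{SS}$, so a \emph{left} Perron vector of $L_{SS}$ padded with zeros lies in $\nll L^{\T}$; two disjoint minimal closed sets then yield two independent left null vectors and multiplicity at least two. Since you intend to defer this direction to the citation anyway (as the paper defers the whole lemma), the fix is cheap, but the claims as displayed are false and should not appear even in a sketch. Your handling of the semisimplicity caveat is adequate; a one-line alternative is that $-L$ is Metzler with $-L\ones=0$, so $e^{-Lt}$ is nonnegative with unit row sums and $\|e^{-Lt}\|_{\infty}=1$ for all $t\geq0$, which excludes Jordan blocks at the origin.
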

Now, the property of containing a directed rooted spanning tree as a subgraph can be related to the matrix $F$ in the factorization (\ref{eqn_Lfactor}) when the underlying undirected graph $\Gcal_{\su}$ is a tree, as stated in the following lemma.
\begin{lemma}\label{lem_Fvsdirrootedspanningtree}
Let the graph $\Gcal$ characterized by $L$ as in (\ref{eqn_Ldef}) be such that the underlying undirected graph $\Gcal_{\su}$ is a tree. Then, $\rank F = \bar{n}-1$ if and only if\/ $\Gcal$ contains a directed rooted spanning tree as a subgraph.
\end{lemma}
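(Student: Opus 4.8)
The plan is to relate $\rank F$ to $\rank L$ via the factorization $L = FE^{\T}$ of Lemma~\ref{lem_Lfactor}, and then to invoke Lemma~\ref{lem_Lvsdirrootedspanningtree} to connect the rank of $L$ with the existence of a directed rooted spanning tree in $\Gcal$. First I would exploit that $\Gcal_{\su}$ is a tree on $\bar{n}$ vertices: it then has exactly $\bar{n}-1$ edges, so $E\in\R^{\bar{n}\times(\bar{n}-1)}$, and, since $\Gcal_{\su}$ is connected, its oriented incidence matrix has rank $\bar{n}-1$. Hence $E$ has full column rank and $E^{\T}$ has full row rank, so $E^{\T}$ is surjective and $\range(FE^{\T}) = \range F$, giving $\rank L = \rank F$. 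It therefore suffices to prove that $\rank L = \bar{n}-1$ if and only if $\Gcal$ contains a directed rooted spanning tree as a subgraph.

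Next I would reformulate $\rank L = \bar{n}-1$ as $\dim\nll L = 1$, i.e., as the geometric multiplicity of the eigenvalue $0$ of $L$ being equal to one (by Lemma~\ref{lem_Lvsdirrootedspanningtree} it is always at least one). By Lemma~\ref{lem_Lvsdirrootedspanningtree}, $\Gcal$ contains a directed rooted spanning tree precisely when $L$ has exactly one zero eigenvalue counted with algebraic multiplicity. To identify these two conditions I would use that the zero eigenvalue of $L$ is \emph{semisimple}, so that its algebraic and geometric multiplicities coincide: this holds because $L$ has nonpositive off-diagonal entries and zero row sums, whence $-L$ has nonnegative off-diagonal entries and zero row sums and $e^{-Lt}$ is a row-stochastic, in particular bounded, matrix for every $t\geq 0$; a nontrivial Jordan block of $L$ at $0$ would force entries of $e^{-Lt}$ to grow polynomially in $t$, a contradiction. (Equivalently, this is the standard fact that the zero eigenvalue of a graph Laplacian is semisimple.) With semisimplicity in hand, $\dim\nll L = 1$ is equivalent to $L$ having exactly one zero eigenvalue, and chaining the equivalences yields $\rank F = \bar{n}-1$ if and only if $\Gcal$ contains a directed rooted spanning tree.

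The rank identity $\rank L = \rank F$ and the passage to eigenvalue multiplicities are routine, so the only step that requires care is the semisimplicity of the zero eigenvalue, which is precisely what makes ``exactly one zero eigenvalue'' in Lemma~\ref{lem_Lvsdirrootedspanningtree} interchangeable with $\rank L = \bar{n}-1$. A self-contained alternative that avoids both $L$ and this subtlety is a direct combinatorial argument on the tree: writing $c^{\T}F = 0$ edge by edge shows that a left null vector $c$ of $F$ must satisfy $c_i w_{ij} = c_j w_{ji}$ along every edge $\{i,j\}$ of $\Gcal_{\su}$; decomposing $\Gcal$ into strongly connected components and using that $\Gcal_{\su}$ is a tree, one can then show that the dimension of the left null space of $F$ equals the number of source components of the condensation of $\Gcal$, which is one exactly when $\Gcal$ contains a directed rooted spanning tree. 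This route is longer but also yields the sharper identity $\rank F = \bar{n} - (\text{number of source components})$.
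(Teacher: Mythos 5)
Your proposal is correct and follows the same overall route as the paper: reduce the question to the rank of $L$ via the factorization $L = FE^{\T}$ and then invoke Lemma~\ref{lem_Lvsdirrootedspanningtree}. The differences are worth noting. The paper handles the two directions separately --- a contradiction argument on the left null space of $F^{\T}$ for one implication, and the full-rank-factorization property of $L=FE^{\T}$ for the other --- whereas you establish the single identity $\rank L = \rank F$ once, from surjectivity of $E^{\T}$, and use it both ways; this is cleaner. More substantively, you make explicit a point the paper passes over: Lemma~\ref{lem_Lvsdirrootedspanningtree} speaks of ``exactly one zero eigenvalue'' (algebraic multiplicity), while $\rank L = \bar{n}-1$ is a statement about geometric multiplicity. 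The paper's forward direction only needs geometric $\leq$ algebraic, which is free, but its converse silently identifies $\rank L = \bar{n}-1$ with the algebraic condition, and that identification genuinely requires semisimplicity of the zero eigenvalue. Your argument via boundedness of the row-stochastic semigroup $e^{-Lt}$ supplies exactly the missing justification, so your write-up is, if anything, more complete than the paper's. The combinatorial alternative you sketch (left null vectors of $F$ satisfying $c_i w_{ij} = c_j w_{ji}$ along tree edges, counted by source components of the condensation) is a genuinely different and self-contained route that avoids Lemma~\ref{lem_Lvsdirrootedspanningtree} entirely and yields the sharper rank formula, at the cost of more bookkeeping.
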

\begin{proof}
First, it is noted that, as $\Gcal_{\su}$ is a tree, it has $\bar{n}-1$ edges and $F\in\R^{\bar{n}\times(\bar{n}-1)}$ (see also the definition of $F$ in the statement of Lemma~\ref{lem_Lfactor}). Consequently, the rank of $F$ is at most $\bar{n}-1$. Assume, for the sake of establishing a contradiction, that $\rank F = \bar{n}-c$ with $c>1$. Then, by the rank-nullity theorem, the null space of $F^{\T}$ has dimension $c$. If $V\in\R^{\bar{n}\times c}$ is a basis for this null space, it satisfies $F^{\T}V=0$. Evaluation of the product $V^{\T}L = V^{\T}FE^{\T} = 0$ implies that $V$ is also in the null space of $L^{\T}$, such that $\rank L < \bar{n}-c < \bar{n}-1$. However, that contradicts the assumption of the existence of a directed rooted spanning tree as a subgraph via Lemma~\ref{lem_Lvsdirrootedspanningtree}, such that $\rank F = \bar{n}-1$.

To prove the converse, assume that $\rank F = \bar{n}-1$. Also, as $\Gcal_{\su}$ is a tree, $\rank E = n-1$. Then, $L = FE^{\T}$ represents a full-rank factorization (see \cite{book_horn_1990}) and $\rank L = \bar{n}-1$. Consequently, by Lemma~\ref{lem_Lvsdirrootedspanningtree}, $\Gcal$ contains a directed rooted spanning tree as a subgraph.
\end{proof}

In the remainder of the paper, networks with a tree structure will be considered.
\begin{assumption}\label{ass_coupling}
The interconnection structure characterized by $L$ as in (\ref{eqn_Ldef}) is such that:
\begin{enumerate}
  \item the underlying undirected graph $\Gcal_{\su}$ is a tree, i.e., $\bar{n}_{\se} = \bar{n}-1$;
  \item the graph $\Gcal$ contains a directed rooted spanning tree as a subgraph.
\end{enumerate}
\end{assumption}
Here, it is remarked neither of these items implies the other.

Under Assumption~\ref{ass_coupling}, the following lemma holds.
\begin{lemma}\label{lem_Le_eigenvalues}
Let the interconnection structure characterized by $L$ as in (\ref{eqn_Ldef}) satisfy Assumption~\ref{ass_coupling} and consider its factorization (\ref{eqn_Lfactor}). Then, the matrix
\begin{align}
L_{\se} = E^{\T}F,
\label{eqn_Ledef}
\end{align}
which will be referred to as the \emph{edge Laplacian}, has all eigenvalues in the open right-half plane. Moreover, these eigenvalues equal the nonzero eigenvalues of $L$.
\end{lemma}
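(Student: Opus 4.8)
The plan is to reduce everything to the elementary fact that, for a matrix product, the two orderings share their nonzero spectrum. Precisely, for $A\in\R^{p\times q}$ and $B\in\R^{q\times p}$ with $p\geq q$ one has $\det(\lambda I_p - AB) = \lambda^{p-q}\det(\lambda I_q - BA)$. By item~1 of Assumption~\ref{ass_coupling} the graph $\Gcal_{\su}$ has $\bar n - 1$ edges, so in the factorization (\ref{eqn_Lfactor}) both $F$ and $E$ lie in $\R^{\bar n\times(\bar n-1)}$. Applying the identity with $A = F$, $B = E^{\T}$ (so $p = \bar n$, $q = \bar n - 1$), and using $L = FE^{\T}$ and $L_{\se} = E^{\T}F$, I obtain
\begin{align}
\det(\lambda I_{\bar n} - L) = \lambda\,\det(\lambda I_{\bar n-1} - L_{\se}).
\label{eqn_charpol_proposal}
\end{align}
Hence the $\bar n - 1$ eigenvalues of $L_{\se}$, together with one additional zero, account exactly for the $\bar n$ eigenvalues of $L$, counted with algebraic multiplicity.

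It then remains to locate the spectrum of $L$. Item~2 of Assumption~\ref{ass_coupling} states that $\Gcal$ contains a directed rooted spanning tree, so Lemma~\ref{lem_Lvsdirrootedspanningtree} applies and shows that $L$ has the eigenvalue $0$ with algebraic multiplicity exactly one and all of its remaining $\bar n - 1$ eigenvalues in the open right-half plane. As a consistency check, Lemma~\ref{lem_Fvsdirrootedspanningtree} gives $\rank F = \bar n - 1$, and $\rank E = \bar n - 1$ since $\Gcal_{\su}$ is a tree, so (\ref{eqn_Lfactor}) is a full-rank factorization and indeed $\rank L = \bar n - 1$.

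Combining the two steps finishes the proof: in (\ref{eqn_charpol_proposal}) the explicit factor $\lambda$ already carries the simple zero eigenvalue of $L$, so $\det(\lambda I_{\bar n-1} - L_{\se})$ does not vanish at $\lambda = 0$ and its $\bar n - 1$ roots coincide with the $\bar n - 1$ nonzero eigenvalues of $L$. Consequently $L_{\se}$ is nonsingular and all its eigenvalues lie in the open right-half plane.

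The part that needs the most care is the multiplicity bookkeeping in the last step: one has to know that the zero eigenvalue of $L$ is \emph{simple}, not merely that $L$ is singular, since otherwise the single factor $\lambda$ in (\ref{eqn_charpol_proposal}) would not be enough to exclude a zero eigenvalue of $L_{\se}$. This simplicity is exactly what Lemma~\ref{lem_Lvsdirrootedspanningtree} delivers under item~2 of Assumption~\ref{ass_coupling}; beyond that, the argument is just the $AB$--$BA$ characteristic-polynomial identity and an edge count, both routine.
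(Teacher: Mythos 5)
Your proof is correct, but it takes a genuinely different route from the paper. The paper block-diagonalizes $L$ by an explicit similarity transformation $T$ as in (\ref{eqn_T}), built from the normalized left null vector $\nu$ of $L$ and the incidence matrix $E$, which yields $TLT^{-1} = \blkdiag\{0, L_{\se}\}$ as in (\ref{eqn_proof_lem_Le_eigenvalues}); the spectral claim then follows by reading off the blocks. You instead invoke the Sylvester/Weinstein--Aronszajn identity $\det(\lambda I_{\bar{n}} - FE^{\T}) = \lambda\,\det(\lambda I_{\bar{n}-1} - E^{\T}F)$, which with the edge count $\bar{n}_{\se}=\bar{n}-1$ from item~1 of Assumption~\ref{ass_coupling} immediately pairs the spectrum of $L_{\se}$ with the nonzero spectrum of $L$. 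Both arguments ultimately rest on Lemma~\ref{lem_Lvsdirrootedspanningtree} for the location of the nonzero eigenvalues and for the \emph{simplicity} of the zero eigenvalue, and you correctly flag that simplicity is the load-bearing step: without it the single factor $\lambda$ would not exclude $0$ from the spectrum of $L_{\se}$. What your route buys is economy: you never need to construct $\nu$ via Perron--Frobenius, verify that $T$ is nonsingular, or write down $T^{-1} = [\,\ones \;\; F(E^{\T}F)^{-1}\,]$ --- a formula that presupposes invertibility of $E^{\T}F$, i.e.\ part of the conclusion, and which the paper must therefore justify more carefully. What the paper's route buys is the explicit coordinate change separating the consensus direction from the edge coordinates, which feeds directly into the edge dynamics interpretation of Remark~\ref{rem_edgedynamics} and the definition of $H_{\se}$ later on; your purely spectral argument does not produce that transformation.
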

\begin{proof}
To prove this lemma, introduce the matrix $T$ as
\begin{align}
T = \left[\begin{array}{c} \nu^{\T} \\ E^{\T} \end{array}\right],
\label{eqn_T}
\end{align}
where $\nu$ is the left eigenvector for the zero eigenvalue of $L$, i.e., $\nu^{\T}L=0$. By the second item of Assumption~\ref{ass_coupling} and Lemma~\ref{lem_Lvsdirrootedspanningtree}, this eigenvalue has multiplicity one, such that $\nu$ is unique (up to scaling). Also, by exploiting Perron-Frobenius theory (see, e.g., \cite{book_horn_1990}), it can be shown that all elements of $\nu$ are nonnegative (and $\nu\neq0$). It is therefore assumed that $\nu$ is scaled such that $\nu^{\T}\ones = 1$. As each column of $E$ has only zero elements except for the pair $(1,-1)$, $\nu$ is linearly independent of the columns of $E$ and $T$ as in (\ref{eqn_T}) is nonsingular. Thus, its inverse exists. In particular, it is given as
$T^{-1} = [\begin{array}{cc} \ones & F(E^{\T}F)^{-1} \end{array}]$. Then, the application of the similarity transformation $T$ to $L$ as in (\ref{eqn_Ldef}) leads to
\begin{align}
TLT^{-1} = \left[\begin{array}{cc} 0 & 0 \\ 0 & E^{\T}F \end{array}\right] =
\left[\begin{array}{cc} 0 & 0 \\ 0 & L_{\se} \end{array}\right].
\label{eqn_proof_lem_Le_eigenvalues}
\end{align}
By Assumption~\ref{ass_coupling} and Lemma~\ref{lem_Lvsdirrootedspanningtree}, $L$ contains only a single zero eigenvalue, which is isolated from the matrix $L_{\se}$ in the representation (\ref{eqn_proof_lem_Le_eigenvalues}). Consequently, $L_{\se}$ contains all non-zero eigenvalues of $L$, which are in the open right-half plane by Lemma~\ref{lem_Lvsdirrootedspanningtree}.
\end{proof}
\begin{remark}\label{rem_edgedynamics}
The matrix $L_{\se}$ in (\ref{eqn_Ledef}) is directly related to the dynamics on the edges of the networked system (\ref{eqn_sys}). To show this, the edge coordinates $x_{\se} = (E^{\T}\kron I_n)x$ are introduced, representing the difference between the state components of two neighboring subsystems. By exploiting the networked dynamics (\ref{eqn_sys}), it is readily shown that $x_{\se}$ satisfies
\begin{align}
\dot{x}_{\se} &= (I_{\bar{n}-1}\kron A - L_{\se}\kron BC)x_{\se} + (E^{\T}G\kron B)u, \label{eqn_edgedynamics_nooutput}
\end{align}
where the factorization $L=FE^{\T}$ and the definition of $L_{\se}$ as in (\ref{eqn_Ledef}) is used. 
Motivated by its role in the dynamics (\ref{eqn_sys}), the matrix $L_{\se}$ might be thought of as the (directed and weighted) edge Laplacian for the graph $\Gcal$. The edge Laplacian for unweighted and undirected graphs is studied in~\cite{zelazo_2011}.
\end{remark}

The edge Laplacian $L_{\se}$, as introduced in Lemma~\ref{lem_Le_eigenvalues}, can be exploited to study synchronization of the networked system $\sys$ as in (\ref{eqn_sys}), as stated in the following theorem.
\begin{theorem}\label{thm_sys_sync}
Consider the networked system $\sys$ as in (\ref{eqn_sys}) with passive subsystems $\sys_i$ as in (\ref{eqn_sysi}). Moreover, let the interconnection structure characterized by $L$ as in (\ref{eqn_Ldef}) be such that Assumption~\ref{ass_coupling} holds. Then, any trajectory of\/ $\sys$ for $u=0$ satisfies (for all $i,j\in\Vcal$)
\begin{align}
\lim_{t\rightarrow\infty} \big(x_i(t) - x_j(t)\big) = 0.
\label{eqn_synchronizationdef}
\end{align}
\end{theorem}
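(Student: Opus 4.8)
The plan is to show that synchronization is equivalent to asymptotic stability of the edge dynamics \eqref{eqn_edgedynamics_nooutput} with $u=0$, and then to prove the latter via a passivity-based Lyapunov argument using the port-Hamiltonian structure of the subsystems together with the spectral properties of $L_{\se}$ from Lemma~\ref{lem_Le_eigenvalues}. First I would observe that, since the underlying undirected graph $\Gcal_{\su}$ is a tree, the incidence matrix $E$ has full column rank $\bar n-1$, so the map $x\mapsto x_{\se}=(E^{\T}\kron I_n)x$ has kernel exactly $\spn\ones\kron\R^n$; hence \eqref{eqn_synchronizationdef} holds for all $i,j$ if and only if $x_{\se}(t)\to0$. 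Thus it suffices to prove that the homogeneous edge system $\dot x_{\se}=(I_{\bar n-1}\kron A - L_{\se}\kron BC)x_{\se}$ is asymptotically stable, i.e. that the matrix $I\kron A - L_{\se}\kron BC = I\kron(J-R)Q - L_{\se}\kron BB^{\T}Q$ is Hurwitz.

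Next I would introduce the storage-type function $W(x_{\se})=\tfrac12 x_{\se}^{\T}(P\kron Q)x_{\se}$ for a suitable $P=P^{\T}\succ0$ to be chosen, and compute $\dot W$ along the edge trajectories. With the substitution $\xi=(I\kron Q)x_{\se}$ one gets, schematically,
\begin{align}
\dot W = -\,x_{\se}^{\T}\!\big(P\kron QRQ\big)x_{\se}\;-\;\tfrac12\,x_{\se}^{\T}\!\big((PL_{\se}+L_{\se}^{\T}P)\kron QBB^{\T}Q\big)x_{\se}
\nonumber
\end{align}
(the $J$-term drops out by skew-symmetry when $P$ commutes appropriately, or is handled by choosing $P$ diagonalizing $L_{\se}$). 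The key point is that, by Lemma~\ref{lem_Le_eigenvalues}, all eigenvalues of $L_{\se}$ lie in the open right-half plane, so there exists $P\succ0$ with $PL_{\se}+L_{\se}^{\T}P\succcurlyeq0$; with such $P$ both terms on the right are negative semidefinite, giving $\dot W\le0$. Asymptotic stability then follows from a LaSalle/observability argument: on the set $\dot W=0$ one has $B^{\T}Qx_{\se}\equiv0$ (from the second term, using that $PL_{\se}+L_{\se}^{\T}P$ can be taken positive definite on the relevant subspace, or using the $QRQ$ term together with minimality) and $QRQ$-dissipation, and minimality of the realization \eqref{eqn_sysi} — equivalently observability of $(A,C)$ together with the spectrum of $L_{\se}$ — forces $x_{\se}\equiv0$.

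The main obstacle I anticipate is handling the interplay between the non-normal matrix $L_{\se}$ (which is not symmetric, since the graph is directed and weighted) and the subsystem dynamics: one cannot simply simultaneously diagonalize $L_{\se}$ by an orthogonal transformation as in the undirected case, so the choice of the weighting matrix $P$ in the Lyapunov function must be coordinated with both the Lyapunov inequality for $L_{\se}$ and the port-Hamiltonian dissipation structure. A clean way around this is to block-diagonalize $L_{\se}$ over $\C$ into its real Jordan/Schur form, reduce to the scalar (or $2\times2$ real block) case $\dot\eta = (A - \lambda BC)\eta$ with $\Re\lambda>0$, and invoke the standard fact that a passive (port-Hamiltonian, minimal) system in negative feedback with a strictly passive static gain $\Re\lambda>0$ is asymptotically stable; assembling these blocks then yields the Hurwitz property of the full edge matrix and hence \eqref{eqn_synchronizationdef}.
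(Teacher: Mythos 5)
Your proposal is correct and takes essentially the same route as the paper's proof: reduce synchronization to asymptotic stability of the edge dynamics, use the Lyapunov function $x_{\se}^{\T}(K\kron Q)x_{\se}$ with $K\succ0$ obtained from a Lyapunov inequality for $L_{\se}$ (which exists since $-L_{\se}$ is Hurwitz by Lemma~\ref{lem_Le_eigenvalues}), drop the $R$-term by $R\succcurlyeq0$, and conclude via LaSalle together with observability of $(A,C)$ from minimality. The only remark is that the obstacle you anticipate around the $J$-term is not there: it cancels for \emph{any} symmetric weight, since it enters as $K\kron(QJQ+QJ^{\T}Q)=0$, so no commutation condition or Schur-form detour is needed.
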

\vspace{2mm}
\begin{proof}
In order to prove the theorem, it is remarked that synchronization as in (\ref{eqn_synchronizationdef}) can be equivalently characterized by asymptotic stability of the edge dynamics as in (\ref{eqn_edgedynamics_nooutput}) for $u=0$. Therefore, asymptotic stability of the edge dynamics (\ref{eqn_edgedynamics_nooutput}) will be shown by introduction of the candidate Lyapunov function
\begin{align}
V(x_{\se}) = x_{\se}^{\T}(K\kron Q)x_{\se},
\label{eqn_thm_sync_proof_lyapfunc}
\end{align}
where $Q = Q^{\T}\succ0$ is the energy function of the passive subsystems as in (\ref{eqn_sysi}). The matrix $K = K^{\T}\succ0$ in (\ref{eqn_thm_sync_proof_lyapfunc}) will be specified later. It is noted that, as both $K$ and $Q$ are symmetric positive definite, their Kronecker product $K\kron Q$ is symmetric positive definite as well. The time-differentiation of (\ref{eqn_thm_sync_proof_lyapfunc}) along the trajectories of (\ref{eqn_edgedynamics_nooutput}) for $u=0$, hereby using properties of the Kronecker product as well as $J = -J^{\T}$ leads to
\begin{align}
\!\!\dot{V}(x_{\se}) &= x_{\se}^{\T}\left( -2K\kron QRQ - (L_{\se}^{\T}K + KL_{\se})\kron C^{\T}C \right)x_{\se}, \nonumber \\
&\leq x_{\se}^{\T}(I\!\kron\! C^{\T})\!\left( (-L_{\se}^{\T}K - KL_{\se})\kron I \right)\!(I\!\kron\!C)x_{\se},\!\! \label{eqn_thm_sync_proof_Vdot_step2}
\end{align}
where the property $R = R^{\T}\succcurlyeq0$ is used to obtain the latter inequality. As Assumption~\ref{ass_coupling} guarantees that $-L_{\se}$ is Hurwitz (through Lemma~\ref{lem_Le_eigenvalues}), there exists a matrix $K=K^{\T}\succ0$ such that $L_{\se}^{\T}K + KL_{\se}\succ\alpha I$ for some $\alpha>0$. Then, (\ref{eqn_thm_sync_proof_Vdot_step2}) satisfies
\begin{align}
\dot{V}(x_{\se}) \leq -\alpha x_{\se}^{\T}(I\kron C^{\T})(I\kron C)x_{\se},
\end{align}
and asymptotic stability of the edge dynamics (\ref{eqn_edgedynamics_nooutput}) follows from observability of the subsystems $\sys_i$ (through minimality) and LaSalle's invariance principle (see, e.g., \cite{book_vanderschaft_2000}).
\end{proof}

\section{Model reduction through clustering}\label{sec_modred}

\subsection{Edge controllability and observability}
The reduction of the networked system $\sys$ as in (\ref{eqn_sys}) will be performed by clustering adjacent vertices (subsystems). To identify the vertices to be clustered, the importance of the edges connecting vertices will be analyzed. Thereto, the \emph{edge system} is introduced as
\begin{align}
\sys_{\se}:\left\{\begin{array}{rcl}
\dot{x}_{\se} &=& \left(I \kron A - L_{\se}\kron BC\right)x_{\se} + (G_{\se}\kron B)u \\
y_{\se} &=& (H_{\se}\kron C)x_{\se},
\end{array}\right.\label{eqn_syse}
\end{align}
with $x_{\se} = (E^{\T}\kron I)x\in\R^{\bar{n}-1}$, $G_{\se} = E^{\T}G$ and $H_{\se} = HF(E^{\T}F)^{-1}$. Also, it is convenient to introduce a different realization of (\ref{eqn_syse}), leading to the \emph{dual edge system} as
\begin{align}
\sys_{\sf}:\left\{\begin{array}{rcl}
\dot{x}_{\sf} &=& \left(I \kron A - L_{\se}\kron BC\right)x_{\sf} + (G_{\sf}\kron B)u \\
y_{\se} &=& (H_{\sf}\kron C)x_{\sf},
\end{array}\right.\label{eqn_sysf}
\end{align}
with $x_{\sf}=((E^{\T}F)^{-1}\kron I)x_{\se}$, $G_{\sf} = (E^{\T}F)^{-1}G_{\se}$ and $H_{\sf} = HF$.

Motivated by the well-known reduction method of balanced truncation \cite{moore_1981}, the importance of edges will be characterized through their controllability and observability properties, motivating the following definition.
\begin{definition}\label{def_edgegrams}
The matrices $\bar{P}_{\se}$ and $\bar{Q}_{\sf}$ are said to be the edge controllability Gramian and edge observability Gramian of the system $\sys$ as in (\ref{eqn_sys}) if they are the controllability Gramian of $\sys_{\se}$ as in (\ref{eqn_syse}) and the observability Gramian of $\sys_{\sf}$ as in (\ref{eqn_sysf}), respectively.
\end{definition}
As the edge Gramians in Definition~\ref{def_edgegrams} do not necessarily allows for an insightful characterization of the importance of \emph{individual} edges, the following matrices are introduced.
\begin{definition}\label{def_genedgegrams}
The matrices $\tilde{P}_{\se} = \tilde{\Pi}^{\sc}\kron Q^{-1}$ and $\smash{\tilde{Q}_{\sf}} = \smash{\tilde{\Pi}^{\so}\kron Q}$ are, respectively, said to be a generalized edge controllability Gramian and a generalized edge observability Gramian for the networked system $\sys$ as in (\ref{eqn_sys}) if the matrices $\tilde{\Pi}^{\sc}\succcurlyeq0$ and $\tilde{\Pi}^{\so}\succcurlyeq0$ are diagonal and satisfy the inequalities
\begin{align}
L_{\se}\tilde{\Pi}^{\sc} + \tilde{\Pi}^{\sc}L_{\se}^{\T} - E^{\T}GG^{\T}E &\succcurlyeq 0,
\label{eqn_genedgecongram_ineq}\\
L_{\se}^{\T}\tilde{\Pi}^{\so} + \tilde{\Pi}^{\so}L_{\se} - F^{\T}H^{\T}HF &\succcurlyeq 0.
\label{eqn_genedgeobsgram_ineq}
\end{align}
\end{definition}
\vspace{2mm}
The introduction of the generalized edge Gramians allows for the interpretation of controllability and observablity properties on the basis of the interconnection topology only, thus providing a suitable basis for a reduction procedure based on clustering of vertices (i.e., subsystems). In particular, the (structured) generalized edge Gramians provide an upper bound on the real Gramians, as formalized next.
\begin{theorem}\label{thm_genedgegrams_bounds}
Consider the networked system $\sys$ as in (\ref{eqn_sys}) satisfying Assumption~\ref{ass_coupling} and assume that the generalized edge controllability Gramian and generalized edge observability Gramian as in Definition~\ref{def_genedgegrams} exist. Then, they bound the edge controllability Gramian and edge observability Gramian as in Definition~\ref{def_edgegrams} as
\begin{align}
\bar{P}_{\se} &\preccurlyeq \tilde{\Pi}^{\sc}\kron Q^{-1}, \label{eqn_thm_genedgegrams_Pebound} \\
\bar{Q}_{\sf} &\preccurlyeq \tilde{\Pi}^{\so}\kron Q. \label{eqn_thm_genedgegrams_Qebound}
\end{align}
\end{theorem}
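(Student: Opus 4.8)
The plan is to compare the Lyapunov/Gramian equations defining the true edge Gramians with the inequalities defining the generalized edge Gramians, and then invoke a standard comparison result for Lyapunov inequalities. I will treat the controllability bound (\ref{eqn_thm_genedgegrams_Pebound}) in detail; the observability bound (\ref{eqn_thm_genedgegrams_Qebound}) follows by the same argument applied to the dual edge system $\sys_{\sf}$ as in (\ref{eqn_sysf}), or equivalently by duality. First I would recall that, since $\sys_{\se}$ as in (\ref{eqn_syse}) has state matrix $\Acal_{\se} = I\kron A - L_{\se}\kron BC$ which is Hurwitz (this uses Assumption~\ref{ass_coupling} via Lemma~\ref{lem_Le_eigenvalues} together with passivity of the subsystems, exactly as in the proof of Theorem~\ref{thm_sys_sync}), the edge controllability Gramian $\bar P_{\se}$ is the unique solution of the Lyapunov equation
\begin{align}
\Acal_{\se}\bar P_{\se} + \bar P_{\se}\Acal_{\se}^{\T} + (G_{\se}\kron B)(G_{\se}\kron B)^{\T} = 0,
\label{eqn_proof_lyap_Pe}
\end{align}
with $G_{\se} = E^{\T}G$.

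Next I would substitute the candidate upper bound $\tilde P_{\se} = \tilde\Pi^{\sc}\kron Q^{-1}$ into the left-hand side of (\ref{eqn_proof_lyap_Pe}) and show that the result is negative semidefinite. Expanding the Kronecker products and using $A = (J-R)Q$, $C = B^{\T}Q$, $QQ^{-1} = I$, together with $J = -J^{\T}$ and $R = R^{\T}\succcurlyeq 0$, the block $I\kron A$ contributes a term $-2\tilde\Pi^{\sc}\kron RQ^{-1}\cdot\text{(symmetrized)}$ which is $\preccurlyeq 0$, while the $L_{\se}\kron BC$ block produces $-(L_{\se}\tilde\Pi^{\sc} + \tilde\Pi^{\sc}L_{\se}^{\T})\kron BB^{\T}$; combining with the input term $(E^{\T}GG^{\T}E)\kron BB^{\T}$ yields exactly $-\big(L_{\se}\tilde\Pi^{\sc} + \tilde\Pi^{\sc}L_{\se}^{\T} - E^{\T}GG^{\T}E\big)\kron BB^{\T}$, which is $\preccurlyeq 0$ precisely because of inequality (\ref{eqn_genedgecongram_ineq}) and $BB^{\T}\succcurlyeq 0$. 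Hence $\tilde P_{\se}$ satisfies $\Acal_{\se}\tilde P_{\se} + \tilde P_{\se}\Acal_{\se}^{\T} + (G_{\se}\kron B)(G_{\se}\kron B)^{\T} \preccurlyeq 0$.

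Finally I would close the argument with the standard monotonicity lemma for Lyapunov inequalities: if $\Acal_{\se}$ is Hurwitz, $\Acal_{\se}\bar P_{\se} + \bar P_{\se}\Acal_{\se}^{\T} + W = 0$ and $\Acal_{\se}\tilde P_{\se} + \tilde P_{\se}\Acal_{\se}^{\T} + W \preccurlyeq 0$, then the difference $\Delta = \tilde P_{\se} - \bar P_{\se}$ satisfies $\Acal_{\se}\Delta + \Delta\Acal_{\se}^{\T}\preccurlyeq 0$, and a Hurwitz Lyapunov inequality of this form forces $\Delta \succcurlyeq 0$, i.e.\ $\bar P_{\se}\preccurlyeq\tilde\Pi^{\sc}\kron Q^{-1}$. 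The dual statement (\ref{eqn_thm_genedgegrams_Qebound}) is obtained identically, starting from the observability Lyapunov equation for $\sys_{\sf}$ with state matrix $I\kron A - L_{\se}\kron BC$ and output matrix $H_{\sf}\kron C = HF\kron B^{\T}Q$, where the relevant sign condition becomes inequality (\ref{eqn_genedgeobsgram_ineq}). The main obstacle I anticipate is purely the bookkeeping of the Kronecker-product algebra in the substitution step — in particular verifying that the port-Hamiltonian structure ($A = (J-R)Q$, $C = B^{\T}Q$) makes the subsystem-level cross terms cancel so cleanly that the residual is a Kronecker product of the graph-level expression in (\ref{eqn_genedgecongram_ineq}) with $BB^{\T}$; everything after that is the off-the-shelf comparison principle.
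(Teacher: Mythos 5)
Your proposal is correct and follows essentially the same route as the paper: both substitute $\tilde{\Pi}^{\sc}\kron Q^{-1}$ into the Lyapunov expression for the stable edge system, use the port-Hamiltonian structure $A=(J-R)Q$, $C=B^{\T}Q$ to reduce the residual to $-2(\tilde{\Pi}^{\sc}\kron R)-\bigl(L_{\se}\tilde{\Pi}^{\sc}+\tilde{\Pi}^{\sc}L_{\se}^{\T}-G_{\se}G_{\se}^{\T}\bigr)\kron BB^{\T}\preccurlyeq 0$, and then invoke the standard Lyapunov-inequality comparison principle (which the paper merely cites while you spell out via the integral representation of $\Delta$). The only nit is the intermediate expression ``$-2\tilde{\Pi}^{\sc}\kron RQ^{-1}$ (symmetrized)'': after the cancellation $QQ^{-1}=I$ the correct term is $-2\tilde{\Pi}^{\sc}\kron R$, which is what makes the sign argument go through.
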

\vspace{2mm}
\begin{proof}
The proof is inspired by results from \cite{besselink_2013d} (see \cite{koshita_2006} for a similar result). In particular, the controllability case will be proven, as the observability case follows similarly. First, it is remarked that $\sys_{\se}$ is asymptotically stable, as follows from Assumption~\ref{ass_coupling} and Theorem~\ref{thm_sys_sync}. As a result, the edge controllability Gramian $\bar{P}_{\se}$ can be obtained as the unique solution of a Lyapunov equation, see, e.g., \cite{book_antoulas_2005}. Now, consider the matrix
\begin{align}
\Lambda &\defl \left(I \kron A - L_{\se}\kron BC\right)(\tilde{\Pi}^{\sc}\kron Q^{-1}) \nonumber\\
&\phantom{\defl}\quad + (\tilde{\Pi}^{\sc}\kron Q^{-1})\left(I \kron A - L_{\se}\kron BC\right){}^{\!\T} \nonumber\\
&\phantom{\defl}\quad + (G_{\se}\kron B)(G_{\se}\kron B)^{\T},
\label{eqn_thm_edgecongram_decomp_proof_step2}
\end{align}
which is of the same form as this Lyapunov equation. Consequently, if $\Lambda\preccurlyeq0$, the matrix $\tilde{\Pi}^{\sc}\kron Q^{-1}$ satisfies the corresponding Lyapunov inequality and the desired result follows \cite{book_dullerud_2000}. The substitution of the relations for the system matrices as in (\ref{eqn_sysi}) in (\ref{eqn_thm_edgecongram_decomp_proof_step2}), hereby exploiting properties of the Kronecker product \cite{brewer_1978}, leads to
\begin{align}
\!\!\Lambda = -2(\tilde{\Pi}^{\sc}\kron R) - \big(L_{\se}\tilde{\Pi}^{\sc} + \tilde{\Pi}^{\sc} L_{\se}^{\T} - G_{\se}G_{\se}^{\T}\big)\kron BB^{\T}\!.\!\!
\label{eqn_thm_edgecongram_decomp_proof_step4}
\end{align}
Then, by recalling that $G_{\se} = E^{\T}G$ and that (\ref{eqn_genedgecongram_ineq}) holds, it follows that $\Lambda\preccurlyeq0$, proving the theorem.
\end{proof}
\begin{remark}
From (\ref{eqn_thm_genedgegrams_Pebound})-(\ref{eqn_thm_genedgegrams_Qebound}), it is clear that the tightest bound is obtained when the solutions $\tilde{\Pi}^{\sc}$ of (\ref{eqn_genedgecongram_ineq}) an $\tilde{\Pi}^{\so}$ of (\ref{eqn_genedgeobsgram_ineq}) are minimized in some sense. A suitable heuristic is the minimization of the trace of $\tilde{\Pi}^{\sc}$ and $\tilde{\Pi}^{\so}$.
\end{remark}

\subsection{One-step reduction by clustering}
The matrices $\tilde{\Pi}^{\sc}$ and $\tilde{\Pi}^{\so}$ as in Definition~\ref{def_genedgegrams} are written as
\begin{align}
\tilde{\Pi}^{\sc} &= \diag\{\pi_1^{\sc},\pi_2^{\sc},\ldots,\pi_{\bar{n}_{\se}}^{\sc}\},
\label{eqn_genedgecongram_diagonal} \\
\tilde{\Pi}^{\so} &= \diag\{\pi_1^{\so},\pi_2^{\so},\ldots,\pi_{\bar{n}_{\se}}^{\so}\},
\label{eqn_genedgeobsgram_diagonal}
\end{align}
where the ordering $\pi_i^{\sc}\pi_i^{\so}\geq\pi_{i+1}^{\sc}\pi_{i+1}^{\so}$ is assumed. It is noted that this can always be achieved by a suitable permutation of the edge numbers. As $\tilde{\Pi}^{\sc}$ and $\tilde{\Pi}^{\so}$ characterize the controllability and observability properties of edges, respectively, the products $\pi_i^{\sc}\pi_i^{\so}$ provide a characterization of the importance of each edge. Consequently, the final edge in the edge system is assumed to be the least important.

Assuming that the vertices $i$ and $j$ associated to the least important edge are numbered as $i = \bar{n}-1$ and $j = \bar{n}$ (this can again be achieved by a suitable permutation of vertex numbers), the projection matrices
\begin{align}
V = \left[\begin{array}{cc} I & 0 \\ 0 & 1 \\ 0 & 1 \end{array}\right], \quad
W = \left[\begin{array}{cc} I & 0 \\ 0 & \frac{w_{ji}}{w_{ij}+w_{ji}} \\ 0 & \frac{w_{ij}}{w_{ij}+w_{ji}} \end{array}\right],
\label{eqn_projection_VW}
\end{align}
are introduced. Then, the approximation of the state $x$ of $\sys$ as in (\ref{eqn_sys}) as $x\approx(V\kron I)\xi$ and projection of the resulting dynamics by $W\kron I$ leads to the one-step clustered system~as
\begin{align}
\sysred:\left\{\begin{array}{rcl}
\dot{\xi} &=& (I\kron A - \hat{L}\kron BC)\xi + (\hat{G}\kron B)u, \\
\hat{y} &=& (\hat{H}\kron C)\xi,
\end{array}\right.\label{eqn_sysred}
\end{align}
with $\hat{L} = W^{\T}LV$, $\hat{G} = W^{\T}G$ and $\hat{H} = HV$.

In order to analyze the properties of the reduced-order networked system (\ref{eqn_sysred}), the matrices $E$ and $F$ as in (\ref{eqn_Lfactor}) can be partitioned according to the clustered vertices $i = \bar{n}-1$ and $j=\bar{n}$ and the corresponding edge $l = \bar{n}_{\se}$ as
\begin{align}
E = \left[\begin{array}{cc}
E_{00} & 0 \\ E_{i0} & E_{il} \\ E_{j0} & E_{jl}
\end{array}\right], \quad
F = \left[\begin{array}{cc}
F_{00} & 0 \\ F_{i0} & F_{il} \\ F_{j0} & F_{jl}
\end{array}\right].
\label{eqn_partitioning_EF}
\end{align}
Here, it is noted that the zero entries in both $E$ and $F$ result from the fact that the corresponding column represents the edge connecting vertex $i$ to $j$. Specifically, $E_{il}\in\{-1,1\}$ and $E_{jl} = -E_{il}$. Similarly, $F_{il} = w_{ij}E_{il}$ and $F_{jl} = w_{ji}E_{jl}$, as follows from Lemma~\ref{lem_Lfactor}.

\begin{lemma}\label{lem_Lhat_factor}
Let the interconnection structure characterized by $L$ as in (\ref{eqn_Ldef}) satisfy Assumption~\ref{ass_coupling} and consider its factorization (\ref{eqn_Lfactor}), in which $E$ and $F$ are partitioned as in (\ref{eqn_partitioning_EF}). Moreover, let $\hat{L}$ be the reduced-order interconnection matrix obtained by projection using the matrices (\ref{eqn_projection_VW}), let $\hat{\Gcal}$ be the graph on $\bar{n}-1$ vertices it characterizes and $\hat{\Gcal}_{\su}$ the underlying undirected graph. Then,
\begin{enumerate}
  \item the matrix $\hat{L}$ can be factored as $\hat{L} = \hat{F}\hat{E}^{\T}$ with
\begin{align}
\hspace{-7mm}
\hat{E} = \left[\!\begin{array}{c} E_{00} \\ E_{i0} \!+\! E_{j0} \end{array}\!\right]\!,
\hat{F} = \left[\begin{array}{c} F_{00} \\ \frac{w_{ji}}{w_{ij}+w_{ji}}F_{i0} \!+\! \frac{w_{ij}}{w_{ij}+w_{ji}}F_{j0} \end{array}\!\right]\!\!;\!\!
\label{eqn_lem_Lhat_factor_Ehat_Fhat}
\end{align}
\item the underlying undirected graph $\hat{\Gcal}_{\su}$ is a tree;
\item the graph $\hat{\Gcal}$ contains a directed rooted spanning tree as a subgraph.
\end{enumerate}
\end{lemma}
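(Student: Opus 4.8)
The plan is to handle the three items in order: item~1 by a direct block computation, and items~2 and~3 by feeding that computation into the graph-theoretic lemmas already available. For item~1, I would substitute $L = FE^{\T}$ into $\hat{L} = W^{\T}LV$ and evaluate the two outer factors separately. Using the partition (\ref{eqn_partitioning_EF}) together with $E_{jl} = -E_{il}$, the last column of $E^{\T}V$ vanishes while the remaining block is exactly $\hat{E}^{\T}$; using $F_{il} = w_{ij}E_{il}$ and $F_{jl} = -w_{ji}E_{il}$, the entry $\tfrac{w_{ji}}{w_{ij}+w_{ji}}F_{il} + \tfrac{w_{ij}}{w_{ij}+w_{ji}}F_{jl}$ is identically zero, so the last column of $W^{\T}F$ vanishes while the rest is exactly $\hat{F}$; hence $\hat{L} = (W^{\T}F)(E^{\T}V) = \hat{F}\hat{E}^{\T}$. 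Along the way one also reads off, directly from $\hat{L} = W^{\T}LV$ and the block structure of $L$, that $\hat{L}$ again has the form (\ref{eqn_Ldef}) with nonnegative weights --- the weight between a retained vertex $k$ and the cluster being $w_{ki} + w_{kj}$ in one direction and $\tfrac{w_{ji}}{w_{ij}+w_{ji}}w_{ik} + \tfrac{w_{ij}}{w_{ij}+w_{ji}}w_{jk}$ in the other, all other weights unchanged --- so that $\hat{\Gcal}$, and the lemmas that refer to it, are legitimate.

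For item~2, I would recognize $\hat{E}$ as an oriented incidence matrix of the graph obtained from $\Gcal_{\su}$ by contracting the edge $\{i,j\}$: each column of $\hat{E}$ is the image, under the row-merge $i,j\mapsto c$, of a column $e_p - e_q$ of $E$, and this image is again of that form (with $p$ or $q$ possibly becoming $c$); no column degenerates to zero and no two columns coincide, because a retained vertex adjacent in $\Gcal_{\su}$ to both $i$ and $j$ would, together with $\{i,j\}$, close a cycle in the tree $\Gcal_{\su}$. Thus $\hat{\Gcal}_{\su}$ is the edge-contraction of the tree $\Gcal_{\su}$; it is connected and has $\bar{n}-2$ edges on $\bar{n}-1$ vertices, hence is a tree, and in particular $\rank\hat{E} = \bar{n}-2$.

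For item~3 I would apply Lemma~\ref{lem_Lvsdirrootedspanningtree} to $\hat{L}$, mirroring the converse part of the proof of Lemma~\ref{lem_Fvsdirrootedspanningtree}: since $\rank\hat{E} = \bar{n}-2$, the factorization $\hat{L} = \hat{F}\hat{E}^{\T}$ is a full-rank factorization as soon as $\rank\hat{F} = \bar{n}-2$, which then forces $\rank\hat{L} = \bar{n}-2$ (a single zero eigenvalue), so Lemma~\ref{lem_Lvsdirrootedspanningtree} yields that $\hat{\Gcal}$ contains a directed rooted spanning tree. It therefore remains to show $\rank\hat{F} = \bar{n}-2$. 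I would write $\hat{F} = SM$, where $M$ is the submatrix formed by the first $\bar{n}-2$ columns of $F$ (the $i$- and $j$-rows being last, as in (\ref{eqn_partitioning_EF})) and $S$ is the $(\bar{n}-1)\times\bar{n}$ matrix that keeps the retained-vertex rows and replaces the $i$- and $j$-rows by $\alpha$ times the first plus $\beta$ times the second, with $\alpha = \tfrac{w_{ji}}{w_{ij}+w_{ji}}$ and $\beta = \tfrac{w_{ij}}{w_{ij}+w_{ji}}$. By Assumption~\ref{ass_coupling} and Lemma~\ref{lem_Fvsdirrootedspanningtree}, $\rank F = \bar{n}-1$, so $F$ has full column rank; then $M$ has full column rank $\bar{n}-2$, and the last column of $F$ --- which, up to the sign $E_{il}$, carries $w_{ij}$ in the $i$-row, $-w_{ji}$ in the $j$-row and zeros elsewhere --- does not lie in $\range M$. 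Since $\nll S$ is one-dimensional and spanned by exactly that same vector (note $\alpha w_{ij} = \beta w_{ji}$, and $(\alpha,\beta)\neq(0,0)$ because $\{i,j\}$ is an edge of $\Gcal_{\su}$), we get $\range M\cap\nll S = \{0\}$, hence $SM$ is injective and $\rank\hat{F} = \bar{n}-2$.

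The main obstacle is item~3, and within it the rank identity $\rank\hat{F} = \bar{n}-2$: the crucial point is that the one-dimensional kernel of the projection $S$ turns out to be spanned by (a scalar multiple of) the very column of $F$ that encodes the clustered edge --- a column that, $F$ having full column rank under Assumption~\ref{ass_coupling}, is linearly independent of all the others. Items~1 and~2 are essentially routine bookkeeping.
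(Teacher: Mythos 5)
Your proposal is correct and follows essentially the same route as the paper's own proof: the block computation of $V^{\T}E$ and $W^{\T}F$ showing the last columns vanish, followed by rank/structure arguments feeding into Lemmas~\ref{lem_Lvsdirrootedspanningtree} and~\ref{lem_Fvsdirrootedspanningtree}. The one place you add genuine content is the claim $\rank\hat{F}=\bar{n}-2$, which the paper merely asserts ``follows from'' the displayed expression for $W^{\T}F$, whereas your factorization $\hat{F}=SM$ with $\nll S$ spanned by precisely the deleted column of $F$ (which is outside $\range M$ since $\rank F=\bar{n}-1$) supplies a complete argument; your detour through the full-rank factorization of $\hat{L}$ and Lemma~\ref{lem_Lvsdirrootedspanningtree} is just the converse direction of Lemma~\ref{lem_Fvsdirrootedspanningtree} unrolled, and your explicit check that $\hat{L}$ retains the form (\ref{eqn_Ldef}) with nonnegative weights is a worthwhile detail the paper leaves implicit.
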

\begin{proof}
The first item can be proven by exploiting the factorization of $L$ as in (\ref{eqn_Lfactor}), from which it follows that
\begin{align}
W^{\T}LV = W^{\T}FE^{\T}V = (W^{\T}F)(V^{\T}E)^{\T}.
\end{align}
The computation of $V^{\T}E$, hereby using (\ref{eqn_partitioning_EF}), leads to
\begin{align}
V^{\T}E = \left[\begin{array}{cc} E_{00} & 0 \\ E_{i0} + E_{j0} & 0 \end{array}\right],
\label{eqn_lem_Lhat_factor_proof_projE}
\end{align}
where it is noted that the final column contains all zeros since $E_{il} + E_{jl} = 0$. Namely, the $l$-th (with $l=\bar{n}-1$) column of $E$ characterizes the edge that connects vertices $i$ and $j$, such that $E_{il}\in\{1,-1\}$ and $E_{jl} = -E_{il}$. Similarly, it can be shown that
\begin{align}
W^{\T}F = \left[\begin{array}{cc} F_{00} & 0 \\ \frac{w_{ji}}{w_{ij}+w_{ji}}F_{i0} + \frac{w_{ij}}{w_{ij}+w_{ji}}F_{j0} & 0 \end{array}\right],
\label{eqn_lem_Lhat_factor_proof_projF}
\end{align}
hereby using a similar argument to proof that the final column contains all zeros. Herein, the choice of the weights in $W$ as in (\ref{eqn_projection_VW}) is crucial. Finally, setting $\hat{E}$ and $\hat{F}$ as the nonzero columns of $V^{\T}E$ and $W^{\T}F$, respectively, proves the first item in the statement of the theorem.

To prove the second item, it is noted that the only nonzero elements in each column in $\hat{E}$ have values given by the pair $(1,-1)$, which follows from the properties of the original indicence matrix $E$ as in (\ref{eqn_partitioning_EF}) and the definition of $\hat{E}$ in (\ref{eqn_lem_Lhat_factor_Ehat_Fhat}). Next, it is clear that $\hat{E}$ has $\bar{k} = \bar{n}-1$ rows, corresponding to the vertices of the clustered graph, and $\bar{k}-1$ columns, corresponding to the edges in the underlying undirected graph $\Gcal_{\su}$. Moreover, it can be concluded from (\ref{eqn_lem_Lhat_factor_proof_projE}) that $\rank\hat{E} = \bar{k}-1$, such that $\Gcal_{\su}$ is connected. As a tree is the only graph that connects $\bar{k}$ vertices with $\bar{k}-1$ edges, $\Gcal_{\su}$ is a tree.

The third item can be proven using similar arguments. It can be observed that $\hat{F}$ has the same size and structure (in the sense as in the statement of Lemma~\ref{lem_Lfactor}) as $\hat{E}$. Next, it follows from (\ref{eqn_lem_Lhat_factor_proof_projF}) that $\rank\hat{F} = \bar{k}-1$, such that the result follows from Lemma~\ref{lem_Fvsdirrootedspanningtree}.
\end{proof}

To obtain further properties of the one-step clustered model $\sysred$ as in (\ref{eqn_sysred}), the corresponding edge system is considered, hereby exploiting the explicit expression of $\hat{E}$ in (\ref{eqn_lem_Lhat_factor_Ehat_Fhat}) to define $\xi_{\se} = (\hat{E}^{\T}\kron I)\xi$, leading to
\begin{align}
\sysred_{\se}:\left\{\begin{array}{rcl}
\dot{\xi}_{\se} &=& \big(I \kron A - \hat{L}_{\se}\kron BC\big)\xi_{\se} + (\hat{G}_{\se}\kron B)u \\
\hat{y}_{\se} &=& (\hat{H}_{\se}\kron C)\xi_{\se},
\end{array}\right.\label{eqn_sysrede}
\end{align}
with $\hat{L}_{\se} \defl \hat{E}^{\T}\hat{F}$ the reduced-order edge Laplacian for the graph $\hat{\Gcal}$. In (\ref{eqn_sysrede}), the matrices $\hat{G}_{\se}$ and $\hat{H}_{\se}$ are given as $\hat{G}_{\se} = \hat{E}^{\T}W^{\T}G$ and $\hat{H}_{\se} = HV\hat{F}(\hat{E}^{\T}\hat{F})^{-1}$, respectively. Moreover, after expressing (\ref{eqn_sysrede}) in new coordinates $\xi_{\sf} = ((\hat{E}^{\T}\hat{F})^{-1}\kron I)\xi_{\se}$, the reduced-order dual edge system $\sysred_{\sf}$ is obtained. Similar to the high-order counterpart in (\ref{eqn_sysf}), it has the same form as the reduced-order edge system (\ref{eqn_sysrede}) with new external input matrix $\hat{G}_{\sf} = (\hat{E}^{\T}\hat{F})^{-1}\hat{E}^{\T}W^{\T}G$ and external output matrix $\hat{H}_{\sf} = HV\hat{F}$.

The edge system (\ref{eqn_sys}) and dual edge system (\ref{eqn_sysf}) play a crucial role in the identification of the most suitable vertices for clustering. After introducing the partitioning
\begin{align}
\!\!\!L_{\se} = \left[\begin{array}{cc} L_{\se,11} & L_{\se,12} \\ L_{\se,21} & L_{\se,22} \end{array}\right]\!,
G_{\se} = \left[\begin{array}{c} G_{\se,1} \\ G_{\se,2} \end{array}\right]\!,
H_{\sf} = \left[\begin{array}{cc} H_{\sf,1} & H_{\sf,2} \end{array}\right]\!,\!\!\!
\label{eqn_partitioning_Le_Ge_Hf}
\end{align}
it can be shown that their reduced-order counterparts are related to the high-order versions through (a partial) singular perturbation procedure, which will be shown to have desirable consequences.
\begin{lemma}\label{lem_edgedynamics_schur}
Consider the edge system $\sys_{\se}$ as in (\ref{eqn_syse}) and the dual edge system $\sys_{\sf}$ as in (\ref{eqn_sysf}) with the partitioned matrices (\ref{eqn_partitioning_Le_Ge_Hf}) and the reduced-order counterparts $\sysred_{\se}$ and $\sysred_{\sf}$ obtained after application of the projection (\ref{eqn_projection_VW}). Then,
\begin{align}
\hat{L}_{\se} &= L_{\se,11} - L_{\se,12}L_{\se,22}^{-1}L_{\se,21}, \label{eqn_lem_edgedynamics_Leschur} \\
\hat{G}_{\se} &= G_{\se,1} - L_{\se,12}L_{\se,22}^{-1}G_{\se,2}, \label{eqn_lem_edgedynamics_Geschur}\\
\hat{H}_{\sf} &= H_{\sf,1} - H_{\sf,2}L_{\se,22}^{-1}L_{\se,21}. \label{eqn_lem_edgedynamics_Hfschur}
\end{align}
\end{lemma}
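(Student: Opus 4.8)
The plan is to compute the reduced-order matrices $\hat{L}_{\se}$, $\hat{G}_{\se}$, $\hat{H}_{\sf}$ directly from their definitions in terms of $\hat{E}$, $\hat{F}$, $W$, $V$ and the partitioning (\ref{eqn_partitioning_EF}), and then recognize the resulting expressions as the Schur complement formulas (\ref{eqn_lem_edgedynamics_Leschur})--(\ref{eqn_lem_edgedynamics_Hfschur}). First I would set up the bookkeeping: partition the original incidence data so that the single clustered edge $l=\bar{n}_{\se}$ is the last edge, and recall from Lemma~\ref{lem_Lhat_factor} the explicit forms of $\hat{E}$ and $\hat{F}$ in (\ref{eqn_lem_Lhat_factor_Ehat_Fhat}). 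The key observation is that $\hat{E}$ differs from $[E_{00}^{\T}\ (E_{i0}+E_{j0})^{\T}]^{\T}$ only by the deletion of the $l$-th column (which is zero after projection), and similarly for $\hat{F}$; so $\hat{L}_{\se}=\hat{E}^{\T}\hat{F}$ is the leading $(\bar{n}_{\se}-1)\times(\bar{n}_{\se}-1)$ block of a matrix closely related to $L_{\se}=E^{\T}F$.

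The main algebraic step is to relate $\hat{E}$ and $\hat{F}$ to the projection $V^{\T}E$, $W^{\T}F$ applied to the \emph{edge} coordinates rather than the vertex coordinates. Concretely, I would show that there is an invertible change of coordinates on the edge space, induced by $V$ and $W$ acting through $E$ and $F$, under which the edge Laplacian $L_{\se}=E^{\T}F$ in block form (\ref{eqn_partitioning_Le_Ge_Hf}) gets block-upper/lower-triangularized, with the $(2,2)$-block $L_{\se,22}$ corresponding precisely to the clustered edge. Eliminating that block — i.e. performing the singular perturbation / Schur complement on the last edge coordinate — then yields $\hat{L}_{\se}=L_{\se,11}-L_{\se,12}L_{\se,22}^{-1}L_{\se,21}$. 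The same transformation applied to the input matrix $G_{\se}=E^{\T}G$ (partitioned as in (\ref{eqn_partitioning_Le_Ge_Hf})) gives (\ref{eqn_lem_edgedynamics_Geschur}), and applied to the output matrix $H_{\sf}=HF$ gives (\ref{eqn_lem_edgedynamics_Hfschur}). Throughout I would repeatedly use $F_{il}=w_{ij}E_{il}$, $F_{jl}=w_{ji}E_{jl}$, $E_{jl}=-E_{il}$, and the fact that the weights $\tfrac{w_{ji}}{w_{ij}+w_{ji}}$, $\tfrac{w_{ij}}{w_{ij}+w_{ji}}$ in $W$ were chosen exactly so that the column sums cancel (as already exploited in the proof of Lemma~\ref{lem_Lhat_factor}).

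The hard part will be keeping the block bookkeeping consistent: one must verify that the partition of $L_{\se}$ in (\ref{eqn_partitioning_Le_Ge_Hf}) induced by singling out the clustered edge really does make $L_{\se,22}$ invertible (this follows from Lemma~\ref{lem_Le_eigenvalues}, since $L_{\se}$ is Hurwitz-negated and hence has no eigenvalue structure forcing a singular principal block — more carefully, $L_{\se,22}$ is itself a Schur-type block of a matrix with spectrum in the open right-half plane, so it is nonsingular), and that the projection matrices $V$, $W$ in (\ref{eqn_projection_VW}) translate, under the maps $E^{\T}(\cdot)$ and $F^{\T}(\cdot)$, into exactly the similarity transformations and state-truncation that implement the Schur complement. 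Once the coordinate change is identified, formulas (\ref{eqn_lem_edgedynamics_Leschur})--(\ref{eqn_lem_edgedynamics_Hfschur}) drop out by comparing the $(1,1)$-block and the first block-row/column of the transformed matrices; the skew-symmetry of $J$ and the structure of $B$, $C$ play no role here since $A$, $B$, $C$ pass through the Kronecker factors untouched. I expect the verification that $L_{\se,22}$ is invertible and the explicit computation of $\hat{E}^{\T}\hat{F}$ in terms of the sub-blocks of $E$ and $F$ to be the only steps requiring care; the rest is substitution.
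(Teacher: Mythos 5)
Your proposal follows essentially the same route as the paper's proof: write $\hat{L}_{\se}=\hat{E}^{\T}\hat{F}$ (and likewise $\hat{G}_{\se}$, $\hat{H}_{\sf}$) explicitly in terms of the sub-blocks of $E$ and $F$ from (\ref{eqn_partitioning_EF}), substitute $E_{jl}=-E_{il}$, $F_{il}=w_{ij}E_{il}$, $F_{jl}=w_{ji}E_{jl}$ together with the weights chosen in $W$, and match the result term-by-term against the Schur complement expressions. One small correction: your justification that $L_{\se,22}$ is invertible because it is ``a Schur-type block of a matrix with spectrum in the open right-half plane'' is not a valid argument in general (principal submatrices of such matrices can be singular); here invertibility is immediate because $L_{\se,22}=E_{il}^{\T}F_{il}+E_{jl}^{\T}F_{jl}=w_{ij}+w_{ji}>0$ is a positive scalar.
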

\vspace{2mm}
\begin{proof}
First, the relation (\ref{eqn_lem_edgedynamics_Leschur}) will be proven. Thereto, it is noted that $\hat{L}_{\se}$ can be written as
\begin{align}
\!\!\!\hat{L}_{\se} = E_{00}^{\T}F_{00} +\! \tfrac{1}{w_{ij}+w_{ji}}(E_{i0} \!+\! E_{j0})^{\T}(w_{ji}F_{i0} \!+\! w_{ij}F_{j0}),\!\! \label{eqn_lem_edgedynamics_schur_proof_step1}
\end{align}
which follows from the definition $\hat{L}_{\se} = \hat{E}^{\T}\hat{F}$ and the definitions of $\hat{E}$ and $\hat{F}$ in (\ref{eqn_lem_Lhat_factor_Ehat_Fhat}). Next, from (\ref{eqn_Ledef}) and the partitioned matrices $E$ and $F$ in (\ref{eqn_partitioning_EF}) it can be concluded that
\begin{align}
L_{\se,11} = E_{00}^{\T}F_{00} + E_{i0}^{\T}F_{i0} + E_{j0}^{\T}F_{j0}
\label{eqn_lem_edgedynamics_schur_proof_step2}
\end{align}
and that the product $L_{\se,12}L_{\se,22}^{-1}L_{\se,21}$ reads
\begin{align}
L_{\se,12}L_{\se,22}^{-1}L_{\se,21} &= (E_{i0}^{\T}F_{il} + E_{j0}^{\T}F_{jl})\big(E_{il}^{\T}F_{il} + E_{jl}^{\T}F_{jl}\big)^{\!-1}\nonumber\\
&\phantom{=}\quad\times(E_{il}^{\T}F_{i0} + E_{jl}^{\T}F_{j0}).
\label{eqn_lem_edgedynamics_schur_proof_step3}
\end{align}
At this point, it is recalled that the $l$-th column of $E$ (and $F$) represents the edge connecting vertices $i$ and $j$, such that $E_{il}\in\{-1,1\}$, $E_{jl} = -E_{il}$ and $F_{il} = w_{ij}E_{il}$, $F_{jl} = w_{ji}E_{jl}$. The substitution of these relations in (\ref{eqn_lem_edgedynamics_schur_proof_step3}) leads to
\begin{align}
L_{\se,12}L_{\se,22}^{-1}L_{\se,21} = \tfrac{1}{w_{ij} + w_{ji}} (w_{ij}E_{i0} - w_{ji}E_{j0})^{\T}(F_{i0} - F_{j0}), \nonumber
\end{align}
such that
\begin{align}
&\hspace{-1mm}L_{\se,11} - L_{\se,12}L_{\se,22}^{-1}L_{\se,21}  = E_{00}^{\T}F_{00} \nonumber\\
&\hspace{13mm} + \left(\tfrac{w_{ji}}{w_{ij}+w_{ji}}\right)E_{i0}^{\T}F_{i0}
+  \left(\tfrac{w_{ij}}{w_{ij}+w_{ji}}\right)E_{j0}^{\T}F_{j0} \nonumber\\
&\hspace{13mm} + \left(\tfrac{w_{ij}}{w_{ij}+w_{ji}}\right)E_{i0}^{\T}F_{j0}
+  \left(\tfrac{w_{ji}}{w_{ij}+w_{ji}}\right)E_{i0}^{\T}F_{i0}.\!\!
\label{eqn_lem_edgedynamics_schur_proof_step5}
\end{align}
It can be checked that (\ref{eqn_lem_edgedynamics_schur_proof_step5}) equals (\ref{eqn_lem_edgedynamics_schur_proof_step1}), which proves the relation (\ref{eqn_lem_edgedynamics_Leschur}) in the statement of the lemma.

The relations (\ref{eqn_lem_edgedynamics_Geschur}) and (\ref{eqn_lem_edgedynamics_Hfschur}) can be proven similarly.
\end{proof}

\subsection{Synchronization preservation and multi-step reduction}
The one-step reduced-order system $\sysred$ as in (\ref{eqn_sysred}) preserves the property of synchronization, as formalized as follows.
\begin{theorem}\label{thm_sysred_sync}
Consider the networked system $\sys$ as in~(\ref{eqn_sys}) satisfying Assumption~\ref{ass_coupling} and let $\sysred$ as in (\ref{eqn_sysred}) be an approximation obtained by projection. Then, any trajectory of\/ $\sysred$ for $u=0$ satisfies (for all $i,j\in\hat{\Vcal}\defl\{1,2,\ldots,\bar{n}-1\}$)
\begin{align}
\lim_{t\rightarrow\infty} \big(\xi_i(t) - \xi_j(t)\big) = 0.
\end{align}
\end{theorem}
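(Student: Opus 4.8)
The plan is to mimic the proof of Theorem~\ref{thm_sys_sync}, using the reduced-order edge system $\sysred_{\se}$ as in~(\ref{eqn_sysrede}) in place of the original edge system. First I would observe that synchronization of $\sysred$ (convergence of all $\xi_i$ to a common trajectory) is equivalent to asymptotic stability of the reduced-order edge dynamics
\begin{align}
\dot{\xi}_{\se} = \big(I \kron A - \hat{L}_{\se}\kron BC\big)\xi_{\se}
\nonumber
\end{align}
for $u=0$, where $\xi_{\se} = (\hat{E}^{\T}\kron I)\xi$ and $\hat{L}_{\se} = \hat{E}^{\T}\hat{F}$; this is because $\hat{\Gcal}_{\su}$ is a tree by Lemma~\ref{lem_Lhat_factor}, so $\hat{E}^{\T}$ has full column rank $\bar{n}-2$ and the map from vertex-difference coordinates to $\xi_{\se}$ is a bijection onto its image in the usual way.

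The crucial ingredient is that $-\hat{L}_{\se}$ is Hurwitz. By Lemma~\ref{lem_Lhat_factor}, $\hat{\Gcal}$ satisfies Assumption~\ref{ass_coupling} (items~2 and~3 of that lemma give exactly the two conditions), so by Lemma~\ref{lem_Le_eigenvalues} applied to the reduced-order interconnection matrix $\hat{L} = \hat{F}\hat{E}^{\T}$, the eigenvalues of $\hat{L}_{\se} = \hat{E}^{\T}\hat{F}$ coincide with the nonzero eigenvalues of $\hat{L}$ and hence lie in the open right-half plane. (Alternatively, one could invoke the Schur-complement representation $\hat{L}_{\se} = L_{\se,11} - L_{\se,12}L_{\se,22}^{-1}L_{\se,21}$ from Lemma~\ref{lem_edgedynamics_schur} together with stability of $L_{\se}$, but the route through Lemma~\ref{lem_Lhat_factor} is cleanest since it directly re-establishes Assumption~\ref{ass_coupling} for the reduced graph.) With $-\hat{L}_{\se}$ Hurwitz, there exists $\hat{K} = \hat{K}^{\T}\succ0$ with $\hat{L}_{\se}^{\T}\hat{K} + \hat{K}\hat{L}_{\se} \succ \alpha I$ for some $\alpha>0$, and I would take $V(\xi_{\se}) = \xi_{\se}^{\T}(\hat{K}\kron Q)\xi_{\se}$ with $Q$ the subsystem energy matrix. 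Differentiating along the reduced edge dynamics, using $J=-J^{\T}$, $R\succcurlyeq0$, and properties of the Kronecker product exactly as in the proof of Theorem~\ref{thm_sys_sync}, gives
\begin{align}
\dot{V}(\xi_{\se}) \leq -\alpha\,\xi_{\se}^{\T}(I\kron C^{\T})(I\kron C)\xi_{\se},
\nonumber
\end{align}
and asymptotic stability follows from observability of $\sys_i$ (via minimality) and LaSalle's invariance principle.

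The main obstacle — and the reason this is not a one-line corollary of Theorem~\ref{thm_sys_sync} — is verifying that the reduced interconnection structure again satisfies Assumption~\ref{ass_coupling}, i.e. that projection by $(V,W)$ as in~(\ref{eqn_projection_VW}) preserves both the tree property of the underlying undirected graph and the existence of a directed rooted spanning tree. That verification, however, is precisely the content of Lemma~\ref{lem_Lhat_factor}, so once that lemma is in hand the remaining argument is a routine replay of Theorem~\ref{thm_sys_sync} with hatted matrices; the only subtlety is confirming that the passivity structure (the $J$, $R$, $Q$ decomposition and the output $C = B^{\T}Q$) is untouched by the projection, which holds because $V$ and $W$ act only on the interconnection coordinates and leave the subsystem realization~(\ref{eqn_sysi}) intact.
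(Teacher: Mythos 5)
Your proposal is correct and follows essentially the same route as the paper: the paper's proof simply notes that Lemma~\ref{lem_Lhat_factor} guarantees the reduced-order interconnection $\hat{L}$ satisfies Assumption~\ref{ass_coupling}, and then invokes Theorem~\ref{thm_sys_sync} directly (since the subsystem realization is unchanged by the projection). Your explicit replay of the Lyapunov argument with hatted matrices is just an unrolled version of that same invocation, and you correctly identify Lemma~\ref{lem_Lhat_factor} as the crucial ingredient.
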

\vspace{2mm}
\begin{proof}
By Lemma~\ref{lem_Lhat_factor}, the reduced-order graph $\hat{\Gcal}$ characterized by $\hat{L}$ satisfies all statements of Assumption~\ref{ass_coupling}.
As a result, synchronization follows directly from Theorem~\ref{thm_sys_sync}.
\end{proof}
Up to this point, a one-step reduction has been considered. However, the results in Lemma~\ref{lem_edgedynamics_schur} can be shown to have the following important consequence.
\begin{theorem}\label{thm_sysred_genedgegrams}
Consider the networked system $\sys$ as in (\ref{eqn_sys}) satisfying Assumption~\ref{ass_coupling} and the reduced-order networked system $\sysred$ as in (\ref{eqn_sysred}). Assume that the generalized edge controllability Gramian $\tilde{\Pi}^{\sc}$ and generalized edge observability Gramian $\tilde{\Pi}^{\so}$ exist and consider (\ref{eqn_genedgecongram_diagonal})-(\ref{eqn_genedgeobsgram_diagonal}). Then,
\begin{enumerate}
  \item $\tilde{\Pi}_1^{\sc}\defl\diag\{\pi_1^{\sc},\ldots,\pi_{\bar{n}_{\se}-1}^{\sc}\}$ is a generalized edge controllability Gramian for the reduced-order system~$\sysred$;
  \item $\tilde{\Pi}_1^{\so}\defl\diag\{\pi_1^{\so},\ldots,\pi_{\bar{n}_{\se}-1}^{\so}\}$ is a generalized edge observability Gramian for the reduced-order system~$\sysred$.
\end{enumerate}
\end{theorem}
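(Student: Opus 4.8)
The plan is to verify that the diagonal matrices $\tilde{\Pi}_1^{\sc}$ and $\tilde{\Pi}_1^{\so}$ satisfy the defining inequalities of Definition~\ref{def_genedgegrams}, but now written for the reduced-order edge Laplacian $\hat{L}_{\se}$ and the reduced input/output matrices $\hat{G}_{\se}$, $\hat{H}_{\sf}$. That is, I would show
\begin{align}
\hat{L}_{\se}\tilde{\Pi}_1^{\sc} + \tilde{\Pi}_1^{\sc}\hat{L}_{\se}^{\T} - \hat{G}_{\se}\hat{G}_{\se}^{\T} &\succcurlyeq 0, \label{eqn_plan_1}\\
\hat{L}_{\se}^{\T}\tilde{\Pi}_1^{\so} + \tilde{\Pi}_1^{\so}\hat{L}_{\se} - \hat{H}_{\sf}^{\T}\hat{H}_{\sf} &\succcurlyeq 0, \label{eqn_plan_2}
\end{align}
using the explicit Schur-complement expressions for $\hat{L}_{\se}$, $\hat{G}_{\se}$ and $\hat{H}_{\sf}$ from Lemma~\ref{lem_edgedynamics_schur}. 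Since the proof of the controllability statement and the observability statement are symmetric, I would only carry out the first in detail and indicate that the second follows identically by duality.

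The key observation is the following. Partition $\tilde{\Pi}^{\sc} = \diag\{\tilde{\Pi}_1^{\sc}, \pi_{\bar{n}_{\se}}^{\sc}\}$ conformably with the partitioning (\ref{eqn_partitioning_Le_Ge_Hf}) of $L_{\se}$ and $G_{\se}$, and write the $2\times2$-block matrix inequality (\ref{eqn_genedgecongram_ineq}) in partitioned form. The claim then reduces to the standard fact that if a symmetric block matrix
\begin{align}
M = \left[\begin{array}{cc} M_{11} & M_{12} \\ M_{12}^{\T} & M_{22} \end{array}\right] \succcurlyeq 0
\end{align}
with $M_{22}\succ0$, then its Schur complement $M_{11} - M_{12}M_{22}^{-1}M_{12}^{\T} \succcurlyeq 0$. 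Here I would take $M$ to be the left-hand side of (\ref{eqn_genedgecongram_ineq}); a direct computation shows that $M_{22} = L_{\se,22}\pi_{\bar{n}_{\se}}^{\sc} + \pi_{\bar{n}_{\se}}^{\sc}L_{\se,22}^{\T} - G_{\se,2}G_{\se,2}^{\T}$ and, crucially, that forming the Schur complement of this $M$ with respect to the $(2,2)$-block reproduces exactly the left-hand side of (\ref{eqn_plan_1}) once one substitutes the expressions $\hat{L}_{\se} = L_{\se,11} - L_{\se,12}L_{\se,22}^{-1}L_{\se,21}$ and $\hat{G}_{\se} = G_{\se,1} - L_{\se,12}L_{\se,22}^{-1}G_{\se,2}$ from Lemma~\ref{lem_edgedynamics_schur}. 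The diagonal structure of $\tilde{\Pi}_1^{\sc}$ is immediate since it is a principal submatrix of the diagonal $\tilde{\Pi}^{\sc}$, and $\tilde{\Pi}_1^{\sc}\succcurlyeq0$ for the same reason.

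Two points require care. First, one must check that $M_{22}\succ0$ (strictly), so that the Schur complement is well-defined; this should follow because $-L_{\se,22}$ inherits Hurwitzness-type structure and the inequality (\ref{eqn_genedgecongram_ineq}) together with controllability-like nondegeneracy forces strict positivity of the corresponding diagonal block — alternatively one can invoke that $L_{\se,22}$ is itself invertible (needed already in Lemma~\ref{lem_edgedynamics_schur}) and argue by a limiting/perturbation argument, or simply note $M\succcurlyeq0$ with invertible Schur-relevant block suffices for the weaker conclusion $\succcurlyeq0$. Second — and this is the main obstacle — the algebraic identity matching the Schur complement of $M$ to the left-hand side of (\ref{eqn_plan_1}) is not a single off-the-shelf formula, because the cross terms mix the Lyapunov-type expression $L_{\se}\tilde{\Pi}^{\sc}+\tilde{\Pi}^{\sc}L_{\se}^{\T}$ with the quadratic term $G_{\se}G_{\se}^{\T}$; one has to expand
\begin{align}
&\big(L_{\se,11}\tilde{\Pi}_1^{\sc} + \tilde{\Pi}_1^{\sc}L_{\se,11}^{\T} - G_{\se,1}G_{\se,1}^{\T}\big) \nonumber\\
&\qquad - M_{12}M_{22}^{-1}M_{12}^{\T}
\end{align}
and verify, using the Schur expressions for $\hat{L}_{\se}$ and $\hat{G}_{\se}$ and the specific form $M_{12} = L_{\se,12}\tilde{\Pi}_1^{\sc} + \tilde{\Pi}_1^{\sc}L_{\se,21}^{\T}\cdot(\text{scalar adjustments}) - G_{\se,1}G_{\se,2}^{\T}$, that all terms reorganize into $\hat{L}_{\se}\tilde{\Pi}_1^{\sc} + \tilde{\Pi}_1^{\sc}\hat{L}_{\se}^{\T} - \hat{G}_{\se}\hat{G}_{\se}^{\T}$. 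This bookkeeping is the heart of the argument; once it is in place, positive semidefiniteness of (\ref{eqn_plan_1}) is immediate from the Schur complement fact, and the observability statement follows by the dual computation applied to (\ref{eqn_genedgeobsgram_ineq}) with the expressions for $\hat{L}_{\se}$ and $\hat{H}_{\sf}$.
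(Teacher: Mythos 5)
Your overall strategy---partition the LMI (\ref{eqn_genedgecongram_ineq}) conformably with (\ref{eqn_partitioning_Le_Ge_Hf}) and eliminate the block corresponding to the removed edge---is the right one, and it is close in spirit to the paper's proof. But the specific mechanism you propose, taking the Schur complement of $M = L_{\se}\tilde{\Pi}^{\sc}+\tilde{\Pi}^{\sc}L_{\se}^{\T}-G_{\se}G_{\se}^{\T}$ with respect to its $(2,2)$-block, does not produce the matrix $\hat{L}_{\se}\tilde{\Pi}_1^{\sc}+\tilde{\Pi}_1^{\sc}\hat{L}_{\se}^{\T}-\hat{G}_{\se}\hat{G}_{\se}^{\T}$, so the ``bookkeeping'' you defer to the end cannot be made to close. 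The object that does equal the target is the congruence $T_c M T_c^{\T}$ with $T_c = [\begin{array}{cc} I & -L_{\se,12}L_{\se,22}^{-1}\end{array}]$, i.e., the elimination matrix is built from the blocks of $L_{\se}$, \emph{not} from the blocks of $M$. The Schur complement corresponds instead to $Y = M_{12}M_{22}^{-1}$, and the two differ by
\begin{align}
T_c M T_c^{\T} - \big(M_{11}-M_{12}M_{22}^{-1}M_{12}^{\T}\big)
= \big(M_{12}-L_{\se,12}L_{\se,22}^{-1}M_{22}\big)M_{22}^{-1}\big(M_{12}-L_{\se,12}L_{\se,22}^{-1}M_{22}\big)^{\T},
\end{align}
which is nonzero in general. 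A minimal counterexample: take $\bar{n}_{\se}=2$, $G_{\se}=0$, $\tilde{\Pi}^{\sc}=I$ and $L_{\se}=\left[\begin{array}{cc} 2 & 1 \\ 0 & 1\end{array}\right]$. Then $\hat{L}_{\se}=2$ and the target quantity is $\hat{L}_{\se}\tilde{\Pi}_1^{\sc}+\tilde{\Pi}_1^{\sc}\hat{L}_{\se}^{\T}=4$, which indeed equals $T_cMT_c^{\T}$ with $T_c=[\begin{array}{cc}1 & -1\end{array}]$; but the Schur complement of $M=\left[\begin{array}{cc} 4 & 1 \\ 1 & 2\end{array}\right]$ is $4-\tfrac{1}{2}=3.5\neq 4$. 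So your claimed identity fails already in the simplest case.

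The fix is simpler than what you set up: with the correct $T_c$, positive semidefiniteness of $T_cMT_c^{\T}$ follows immediately from $M\succcurlyeq 0$ by congruence---no Schur complement, and hence no need to establish $M_{22}\succ 0$ (your first ``point requiring care'' evaporates). The computation that remains is exactly the one you anticipated, namely expanding $T_cMT_c^{\T}$ and using $\hat{L}_{\se}=L_{\se,11}-L_{\se,12}L_{\se,22}^{-1}L_{\se,21}$ and $\hat{G}_{\se}=G_{\se,1}-L_{\se,12}L_{\se,22}^{-1}G_{\se,2}$ from Lemma~\ref{lem_edgedynamics_schur}: the $\tilde{\Pi}_1^{\sc}$-terms reassemble into $\hat{L}_{\se}\tilde{\Pi}_1^{\sc}+\tilde{\Pi}_1^{\sc}\hat{L}_{\se}^{\T}$, the $G$-terms into $-\hat{G}_{\se}\hat{G}_{\se}^{\T}$, and the terms involving $\pi_{\bar{n}_{\se}}^{\sc}$ cancel identically. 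Your remark that the observability statement follows by the dual argument applied to (\ref{eqn_genedgeobsgram_ineq}) with $\hat{H}_{\sf}$ is correct once the controllability half is repaired in this way.
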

\begin{proof}
The theorem can be proven by following~\cite{fernando_1982}. In particular, after defining the projection matrix $T_c = [\begin{array}{cc} I & -L_{\se,12}L_{\se,22}^{-1} \end{array}]$, it can be applied to (\ref{eqn_genedgecongram_ineq}) to obtain
\begin{align}
&T_c(L_{\se}\tilde{\Pi}^{\sc} + \tilde{\Pi}^{\sc}L_{\se}^{\T} - G_{\se}G_{\se}^{\T})T_c^{\T} \nonumber\\
&\qquad= \hat{L}_{\se}\tilde{\Pi}_1^{\sc} + \tilde{\Pi}_1^{\sc}\hat{L}_{\se}^{\T} - \hat{G}_{\se}\hat{G}_{\se}^{\T} \succcurlyeq 0,
\label{eqn_thm_sysred_genedgegrams_proof_step1}
\end{align}
where (\ref{eqn_partitioning_Le_Ge_Hf}) is used as well as (\ref{eqn_lem_edgedynamics_Leschur}) and (\ref{eqn_lem_edgedynamics_Geschur}). It can be seen that the right-hand side of the equality in (\ref{eqn_thm_sysred_genedgegrams_proof_step1}) characterizes a generalized edge controllability Gramian for the reduced-order system $\sysred$, proving the first part of the theorem. The observability counterpart can be proven similarly.
\end{proof}
The results in Theorem~\ref{thm_sysred_genedgegrams} show that the one-step reduced order system $\sysred$ as in (\ref{eqn_sysred}) can be characterized by the relevant parts of the original generalized edge Gramians. Combined with the observation that $\sysred$ satisfies Assumption~\ref{ass_coupling} (through Lemma~\ref{lem_Lhat_factor}), this implies that the one-step reductions can be repeatedly applied to obtain a clustered system of arbitrary order. Here, the preservation of synchronization as in Theorem~\ref{thm_sysred_sync} remains guaranteed.

\section{Illustrative example}\label{sec_example}
To illustrate the reduction procedure, a simplified thermal model of a corridor of six rooms is considered. Motivated by \cite{ma_2011}, each room is modeled as a two thermal-mass system, leading to the dynamics
\begin{align}
\begin{array}{rcl}
C_1\dot{T}_1^i &=& R^{-1}_{\text{\rm int}}(T_2^i-T_1^i) - R_{\text{\rm out}}^{-1}T_1^i + P_i, \\
C_2\dot{T}_2^i &=& R^{-1}_{\text{\rm int}}(T_1^i-T_2^i).
\end{array}
\label{eqn_example_roomdynamics}
\end{align}
Here, $T_1^i$ and $T_2^i$ represent the (deviations from the environmental) temperature of the fast thermal mass $C_1$ (representing the air in the room) and slow thermal mass $C_2$ (representing solid elements such as walls, floor and furniture), respectively (i.e., $C_2>C_1$). In (\ref{eqn_example_roomdynamics}), $R_{\text{\rm int}}$ is the thermal resistance between the slow and fast thermal masses in the room, whereas $R_{\text{\rm out}}$ represents the thermal resistance of the outer walls, hereby assuming that the environmental temperature is constant.
After choosing $x_i^{\T} = [\begin{array}{cc} T_1^i & T_2^i \end{array}]$, $v_i = P_i$ and $z_i = T_1^i$, it is readily checked that (\ref{eqn_example_roomdynamics}) can be written in the form (\ref{eqn_sysi}) with $Q = \diag\{C_1,C_2\}$, $J=0$,
\begin{align}
R = \frac{1}{R_{\text{\rm int}}C_1C_2}\left[\begin{array}{cc} C_2/C_1 & 1 \\ 1 & C_1/C_2\end{array}\right]
+ \frac{1}{R_{\text{\rm out}}C_1^2}\left[\begin{array}{cc} 1 & 0 \\ 0 & 0 \end{array}\right]
\end{align}
and $B = [\begin{array}{cc}C_1^{-1} & 0 \end{array}]^{\T}$. In (\ref{eqn_example_roomdynamics}), $v_i = P_i$ represents the power associated with external influences other than that of the outside temperature, being external inputs such as heaters and the heat exchange with neighbouring rooms.
In particular, a corridor of six rooms is considered, such that the coupling between the rooms is given by a path graph $\Gcal$ as in Figure~\ref{fig_example_clustering}. The interconnection can thus be written in the form (\ref{ass_coupling}), where the nonzero weights are given by the thermal resistances of the walls as $w_{ij} = w_{ji} = R_{\text{\rm wall}}^{-1}$. Moreover, the control of the third room is of interest. Assuming that the temperature of this room can be influenced (e.g., through heaters) and measured, it follows that $G = H^{\T} = e_3$. The parameters are taken as $C_1=4.35\cdot10^4$~J/K, $C_2=9.24\cdot10^6$~J/K, $R_{\text{\rm int}}=2.0\cdot10^{-3}$~K/W, $R_{\text{\rm out}}=23\cdot10^{-3}~$K/W and $R_{\text{\rm wall}}=16\cdot10^{-3}~$K/W.

At this point, it is remarked that the assumptions on the networked system (\ref{eqn_sys}) require that the (internal) dynamics of each room is equal. However, the thermal resistances of the walls separating the rooms are part of the interconnection (\ref{eqn_couplingi}) and can thus vary between rooms.

\begin{figure}
  \begin{center}
  \begin{tikzpicture}
    [auto,vertex/.style={circle,draw,inner sep=0mm,minimum size=7mm,fill=white}]
    \node (c1) at (-1.65,0) [rectangle,draw,rounded corners=4mm,minimum height=8mm,minimum width=19mm,fill=black!20] {};
    \node (c2) at (2.2,0) [rectangle,draw,rounded corners=4mm,minimum height=8mm,minimum width=30mm,fill=black!20] {};
    \node (v1) at (-2.2,0) [vertex] {$\sys_1$};
    \node (v2) at (-1.1,0) [vertex] {$\sys_2$};
    \node (v3) at (0,0) [vertex] {$\sys_3$};
    \node (v4) at (1.1,0) [vertex] {$\sys_4$};
    \node (v5) at (2.2,0) [vertex] {$\sys_5$};
    \node (v6) at (3.3,0) [vertex] {$\sys_6$};
    \draw (v1.east) -- (v2.west) node [above,align=center,midway]{$1$};
    \draw (v2.east) -- (v3.west) node [above,align=center,midway]{$2$};;
    \draw (v3.east) -- (v4.west) node [above,align=center,midway]{$3$};;
    \draw (v4.east) -- (v5.west) node [above,align=center,midway]{$4$};;
    \draw (v5.east) -- (v6.west) node [above,align=center,midway]{$5$};;
    \node (u) at (-0.8,-0.8) {$u$};
    \node (y) at (0.8,-0.8) {$y$};
    \draw [->] (u.north east) to [bend right=0] (v3.south west);
    \draw [<-] (y.north west) to [bend left=0] (v3.south east);
  \end{tikzpicture}
  \vspace{-3mm}
  \caption{Path graph representing a corridor and clusters after reduction.}
  \label{fig_example_clustering}
  \end{center}
\end{figure}
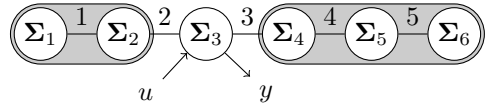

The generalized edge Gramians $\tilde{\Pi}^{\sc}$ and $\tilde{\Pi}^{\so}$ are computed by solving (\ref{eqn_genedgecongram_ineq})-(\ref{eqn_genedgeobsgram_ineq}), hereby minimizing their trace. Then, the computation of the products $\pi_i^{\sc}\pi_i^{\so}$ shows that edge $5$ has the smallest influence on the input-output behavior of the networked system, followed by edges $1$ and $4$.
Consequently, a three-step reduction leads to the clusters as in Figure~\ref{fig_example_clustering}, where it is noted that the rightmost cluster is formed in two steps. Thus, the two leftmost rooms as well as the three rightmost rooms are approximated as a single room each. However, the thermal resistances between these new approximated rooms and room three have been updated according to the projection (\ref{eqn_projection_VW}) (in three steps) to give a good representation of the original high-order model. Consequently, the wall thermal resistances are no longer equal throughout the (reduced-order) interconnection topology.
Finally, Figure~\ref{fig_example_frf} shows a comparison of the transfer functions of the original networked system $\sys$ as the reduced-order networked system $\sysred$, indicating a good approximation.

\section{Conclusions}\label{sec_conclusions}
A clustering-based approach towards model reduction of networks of interconnected passive subsystems is presented in this paper, hereby exploiting controllability and observability properties of the associated edge systems. The intuitive approach is shown to guarantee the preservation of synchronization properties.

\begin{figure}
  \begin{center}
    \xlabel{$f$ [1/hr]}
    \ylabel{$|T|$ [K/W]}
    \legend{T}{$T$}
    \legend{Tr}{$\hat{T}$}
    \includegraphics[scale=0.4]{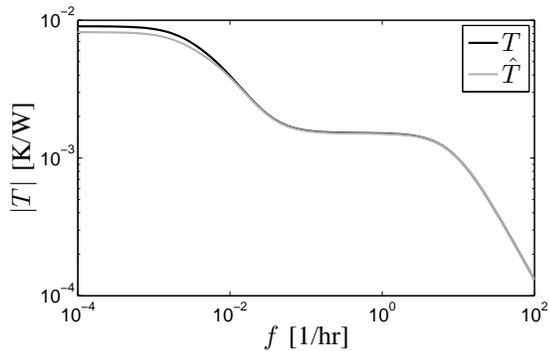}
    \vspace{-2mm}
    \caption{Comparison of the magnitude of the frequency response functions $T$ of $\sys$ and $\hat{T}$ of $\sysred$ for the configuration in Figure~\ref{fig_example_clustering}.}
    \label{fig_example_frf}
  \end{center}
\end{figure}

\bibliographystyle{plain}
\bibliography{report_v2}

\end{document}